\documentclass[letterpaper,10pt,conference]{ieeeconf}  

\IEEEoverridecommandlockouts                              

\usepackage{graphics} 
\usepackage{epsfig} 
\usepackage{amsmath} 
\usepackage{amssymb}  
 
\usepackage{amsthm}
\usepackage{cite}
\usepackage{color}
\usepackage{enumerate}
\usepackage{graphicx}
\usepackage{mathtools}
\usepackage{cancel}
\usepackage{algorithm}
\usepackage{algpseudocode}
\usepackage{textgreek}
\usepackage{setspace}
\usepackage{cleveref}
\crefname{equation}{}{}
\crefname{proposition}{Proposition}{Propositions}
\usepackage{url}
\newcommand{\nspace}[1]{\mkern#1mu}
\graphicspath{{figures/}}

\newcommand{\real}{\mathbb{R}}
\newcommand{\realnonneg}{\mathbb{R}_{\ge 0}}

\newcommand{\until}[1]{[#1]}
\newcommand{\map}[3]{#1:#2 \rightarrow #3}

\newcommand{\longthmtitle}[1]{\mbox{}{\textit{(#1):}}}

\newcommand{\setdefb}[2]{\big\{#1 \; | \; #2\big\}}
\newcommand{\setdefB}[2]{\Big\{#1 \; | \; #2\Big\}}
\newcommand*{\SetSuchThat}[1][]{} 
\newcommand*{\MvertSets}{%
    \renewcommand*\SetSuchThat[1][]{%
        \mathclose{}%
        \nonscript\;##1\vert\penalty\relpenalty\nonscript\;%
        \mathopen{}%
    }%
}
\MvertSets 
\DeclarePairedDelimiterX \Set [2] {\lbrace}{\rbrace}
    {\,#1\SetSuchThat[\delimsize]#2\,}

\newcommand{\Ac}{\mathcal{A}}
\newcommand{\Cc}{\mathcal{C}}

\newcommand{\Ec}{\mathcal{E}}
\newcommand{\Fc}{\mathcal{F}}

\newcommand{\Ic}{\mathcal{I}}
\newcommand{\Kc}{\mathcal{K}}

\newcommand{\Uc}{\mathcal{U}}
\newcommand{\Sc}{\mathcal{S}}

\newcommand{\R}{\mathbb{R}}

\newcommand{\Ke}{\Kc^{\rm e}}

\newcommand{\defeq}{\triangleq}

\newtheorem{theorem}{Theorem}
\newtheorem{lemma}{Lemma}
\newtheorem{corollary}{Corollary}
\newtheorem{proposition}{Proposition}

\theoremstyle{definition}
\newtheorem{definition}{Definition}

\newtheorem{remark}{Remark}

\newtheorem{assumption}{Assumption}

\newcommand{\be}{\mathbf{e}}
\renewcommand{\bf}{\mathbf{f}} 
\newcommand{\bg}{\mathbf{g}}

\newcommand{\bk}{\mathbf{k}}

\newcommand{\bu}{\mathbf{u}}
\newcommand{\bv}{\mathbf{v}}

\newcommand{\bx}{\mathbf{x}}
\newcommand{\by}{\mathbf{y}}

\newcommand{\bA}{\mathbf{A}}
\newcommand{\bB}{\mathbf{B}}

\newcommand{\bI}{\mathbf{I}}
\newcommand{\bJ}{\mathbf{J}}

\newcommand{\bP}{\mathbf{P}}
\newcommand{\bQ}{\mathbf{Q}}
\newcommand{\bR}{\mathbf{R}}

\newcommand{\bOmega}{\boldsymbol{\Omega}}
\newcommand{\bphi}{\boldsymbol{\varphi}}

\newcommand{\obs}{{\operatorname{o}}}
\newcommand{\des}{{\operatorname{d}}}

\newcommand{\diag}{{\operatorname{diag}}}

\renewcommand{\baselinestretch}{1}
\allowdisplaybreaks

\DeclareMathOperator*{\argmin}{argmin}

\title{\LARGE \textbf{SafeSpace: Aggregating Safe Sets from Backup \\ Control Barrier Functions under Input Constraints}}

\author{Pio Ong, David E. J. van Wijk, Massimiliano de Sa, Joel W. Burdick, Aaron D. Ames %
\thanks{All authors are with the Department of Mechanical and Civil Engineering, California Institute of Technology, Pasadena, CA 91125, USA. \texttt{\{pioong, vanwijk, mdesa, jburdick, ames\}@caltech.edu}.}
\thanks{This work was in part supported by the Technology Innovation Institute, AFOSR Award \#113535-19668, and DARPA under the LINC program.}
}
\begin{document}
\maketitle
\begin{abstract}
Control barrier functions (CBFs) provide a principled framework for enforcing safety in control systems---yet the certified safe operating region in practice is often conservative, especially under input bounds. In many applications, multiple smaller safe sets can be certified independently, e.g., around distinct equilibria with different stabilizing controllers. This paper proposes a framework for uniting such regions into a single certified safe set using \emph{combinatorial CBFs}. We refine the combinatorial CBF framework by introducing an auxiliary variable that enables logical compositions of individual CBFs. In the proposed framework, we show that such compositions yield a \emph{generalized combinatorial CBF} under a condition termed \emph{conjunctive compatibility}. Building on this result, we extend the framework to enable the aggregation of multiple implicit safe sets generated by the backup CBF framework. We show that the resulting CBF-based quadratic program yields a continuous safety filter over the aggregated safe region. The approach is demonstrated on two spacecraft safety problems, safe attitude control and safe station keeping, where multiple certified safe regions are combined to expand the operational envelope.
\end{abstract}

\begin{spacing}{0.975}
\section{Introduction}
Control barrier functions (CBFs)~\cite{ADA-XX-JWG-PT:17} have emerged as a powerful tool for enforcing safety in control systems by enabling the synthesis of controllers that render a desired safe set forward invariant. Despite their effectiveness, the certified safe operating region obtained in practice is often conservative, particularly in the presence of input bounds. This typically leads to the identification of multiple smaller safe sets, each associated with a distinct controller or safe operating condition. However, safety filters constructed for a given set confine the system to its initial region, limiting the flexibility to transition between multiple certified regions. This motivates the development of systematic methods for enlarging certified safe regions, either by expanding a given safe set or by combining multiple safe sets into a single one.

A prominent class of approaches to enlarging safe operating regions seeks to approximate the viability kernel~\cite{JPA-AMB-PSP:11}, i.e., the maximal control invariant set contained within the safety constraint. Methods based on Hamilton--Jacobi reachability \cite{SB-MC-SH-CJT:17} aim to characterize such sets through propagating backward reachability of a given initial set under all admissible control inputs. While these methods converge to the desired viability kernel, they require solving partial differential equations, leading to significant computational complexity and limited scalability. On a related note, backup CBFs \cite{TG-MM-AS-PN-EF-ADA:20} offer a scalable alternative by expanding  a safe set using a \textit{single} controller that renders it forward invariant, resulting in tractable characterization of the expanded safe set. However this restriction to a single controller generally leads to conservative approximations of the viability kernel. To mitigate this, recent work~\cite{PR-JBH:25} has explored combining multiple backup controllers and their associated safe sets. Such development highlights the potential of aggregating multiple certified regions.

An alternative approach is to combine multiple safety constraints through logical compositions of control barrier functions. Early works~\cite{PG-JC-ME:17, LW-ADA-ME:16} formulate such compositions using pointwise maximum operators to encode disjunctions, corresponding to the union of safe sets. However, these constructions introduce nonsmoothness in the resulting barrier functions, which complicates both analysis and controller synthesis. Smooth approximations based on soft maximum operators~\cite{TGM-ADA:23, TGM:25, MB-DP:23} mitigate this issue at the cost of additional conservatism, and have also been used to combine safe sets from backup controllers~\cite{PR-JBH:25}. Related formulations based on signal temporal logic~\cite{LL-DVD:19,LL-DVD:19b} provides expressive tools for combining safety requirements, but similarly rely on approximations or nonsmooth constructions. Our work builds on the more recent development of combinatorial CBFs~\cite{PO-HL-TGM-DP-ADA:25-csl}, which enable logical compositions without these limitations.

The logical combination of multiple safety constraints also raises feasibility challenges, as the corresponding conditions must be satisfied simultaneously. In standard CBF-CLF quadratic programs that incorporate both stability and safety constraints~\cite{ADA-SC-ME-GN-KS-PT:19}, feasibility is often ensured by introducing slack variables, sacrificing satisfaction of some constraints. More recently, the optimal-decay control barrier function (OD-CBF) framework~\cite{JZ-BZ-ZL-KS:21} improved feasibility for individual safety constraints by treating the decay rate as a decision variable. This additional flexibility in the decay rate helps bridge the gap between control invariance and the standard barrier condition; see the complete characterization provided in~\cite{PO-MHC-TGM-ADA:25-cdc} or the discussion on converse theorems in~\cite{PM-JC:25}. In this work, we extend this idea by introducing auxiliary variables to enable compatibility among multiple constraints and ensure the feasibility of the resulting safety filter.

\textbf{Statement of Contribution:} This paper proposes a unified framework for enlarging certified safe regions by combining multiple safety constraints within the control barrier function paradigm. First, we refine the combinatorial CBF framework by introducing an auxiliary variable that enables logical compositions of multiple CBFs. The resulting \emph{generalized combinatorial CBF} framework ensures that multiple constraints can be simultaneously enforced under a condition termed \emph{conjunctive compatibility}. Second, we extend the framework for backup CBFs, enabling logical combinations of implicit safe sets. Third, we show that the associated CBF-based quadratic program is a continuous feedback law over the aggregated safe set, avoiding the need for switching or hybrid control strategies. Finally, we demonstrate the proposed approach on spacecraft safety problems, including safe attitude control and safe station keeping, where only small certified safe regions can be found due to severe input bounds. Simulation results show that combining multiple safe sets significantly expands the safe operational region when compared to the standard CBF method.

\section{Background}

\subsection{Safety via Control Barrier Functions}\label{sec:vanillaCBF}

Consider the control-affine system\footnote{We denote $[N] \defeq \{1,\dots,N\}$. Given a differentiable function $\map{h}{\real^n}{\real}$, its Lie derivatives along a vector field $\map{\bf}{\real^n}{\real^n}$ (or $\map{\bf}{\real^n}{\real^{n\times m}}$) is defined as $L_\bf h(\bx) \defeq \nabla h(\bx) \bf(\bx)$.
A function $\alpha:\real\to\real$ is of class-$\Ke$ if it is continuous, strictly increasing, and satisfies $\alpha(0)=0$.
Given a finite set $\Fc\subset\real$, $\max^r \Fc$ denotes its $r$-th order statistic, i.e., the $r$-th largest element in $\Fc$.}:
\begin{equation}\label{sys:ctrl_affine}
    \dot\bx = \bf(\bx)+\bg(\bx)\bu,
\end{equation}
with system state $\bx\in\real^n$ and control input $\bu\in\Uc$, where $\Uc \subseteq\real^m$ is a polytopic set. The drift dynamics $\bf:\real^n \to \real^n$ and the control matrix $\bg:\real^n \to \real^{n \times m}$ are assumed to be continuously differentiable. For this system, we consider state constraints defined by a \textit{safety function} $\map{\psi}{\real^n}{\real}$:
\begin{equation}\label{eq:safety_constraint}
    \Sc \defeq \setdefb{\bx\in\real^n}{\psi(\bx)\geq 0}.
\end{equation}
We seek to design a controller $\map{\bk}{\real^n}{\Uc}$ such that the state feedback $\bu=\bk(\bx)$ produces closed-loop  trajectories $\bphi_\bk(t,\bx_0)$ that remain inside the set $\Sc$ for all $t\geq 0$ and initial conditions $\bx_0\in\Cc$. A necessary condition for the existence of such a controller is that the set $\Sc$ is control invariant.
\begin{definition}\longthmtitle{Control Invariance}\label{def:contr-invariance}
    A set $\Cc$ is \textit{control invariant} for system~\eqref{sys:ctrl_affine} under a given input constraint $\Uc\subseteq \real^m$ if, for each initial condition $\bx_0\in\Cc$, there exists a control signal $t\mapsto\bu(t)$ satisfying $\bu(t)\in\Uc$ for all $t\geq 0$ such that the corresponding state trajectory $\bphi_\bu(t,\bx_0)$ remains inside the set $\Cc$ for all $t\geq 0$.~\hfill$\diamond$
\end{definition}
In many applications, the constraint set~$\Sc$ is specified by the problem requirements and is generally not control invariant by default. To address safety concerns, we therefore seek a control invariant $\Cc$:
\begin{equation}\label{eq:safe_set}
    \Cc \defeq \setdefb{\bx\in\real^n}{h(\bx)\geq 0}.
\end{equation}
defined by a function $\map{h}{\real^n}{\real}$. The set $\Cc$ is said to be \textit{safe} if it is control invariant and satisfies $\Cc\subseteq\Sc$. This set can be interpreted as a safe operating region for the system.

We next introduce control barrier functions, which provide sufficient conditions to certify the control invariance of $\Cc$.
\begin{definition}\longthmtitle{Control Barrier Functions~\cite{ADA-XX-JWG-PT:17}}
    A continuously differentiable function $\map{h}{\real^n}{\real}$ is a \textit{control barrier function (CBF)} for system~\eqref{sys:ctrl_affine} under a given input constraint $\Uc\subseteq \real^m$ if there exists a class-$\Ke$ function $\alpha$ such that, for each $\bx \in \Cc$, there exists a control input $\bu\in\Uc$ satisfying:
    \begin{equation}
    \label{eq:CBF}
        \dot h(\bx,\bu)\defeq L_\bf h(\bx)+L_\bg h(\bx)\bu > -\alpha(h(\bx)).
    \end{equation}
\end{definition}
The key idea underlying CBFs is that they ensure the existence of a control input $\bu\in\Uc$ such that the differential inequality ``${\dot h > -\alpha(h)}$" is satisfied, which, by the comparison lemma, guarantees that $h(\bx(t))$ remains nonnegative for all~${t\geq 0}$. Note, however, that the CBF condition itself only guarantees the existence of an admissible control input at each state ${\bx \in \Cc}$; through an optimization-based controller synthesis framework, these pointwise inputs can be assembled into a continuous state-feedback function.  

\subsection{Safety Filter Framework}\label{sec:filter}
A common approach to enforcing safety constraints using CBFs is through the \emph{safety filter} framework \cite{ADA-SC-ME-GN-KS-PT:19}. Given a continuous nominal controller ${\bk_{\des}:\real^n \to \real^m}$, a widely used construction of a safety-filtered controller ${\bx\mapsto\bk(\bx)}$ defined on a neighborhood of $\Cc$, respecting a polytopic input constraint~$\Uc$, is given by the quadratic program (CBF-QP):
\begin{align}\label{eq:CBF-QP}
    \bk(\bx)  =\argmin_{\bu\in\Uc} \quad & \|\bu-\bk_\des(\bx)\|^2\\
     \textup{s.t.} \quad & \dot h_j(\bx,\bu) \geq -\alpha(h_j(\bx)),~\forall j\in\until{p} \nonumber
\end{align}
which naturally accommodates multiple CBFs $\{h_j\}_{j=1}^p$ defining sets $\{\Cc_j\}_{j=1}^p$ as in~\eqref{eq:safe_set}. The resulting controller is continuous provided that the CBF-QP satisfies Slater's condition, i.e., there exists a control input $\bu$ that strictly satisfies all inequalities\footnote{This observation motivates the use of a strict inequality in the modern definition of CBF~\eqref{eq:CBF}.}, at each $\bx\in\Cc$~\cite{PM-AA-JC:25}. Moreover, by construction, the CBF-QP controller~\eqref{eq:CBF-QP} satisfies all the CBF constraints and therefore renders the set 
\begin{align}
\bigcap_{j\in[p]}\Cc_j &= \setdefb{\bx\in\real^n}{h_j(\bx)\geq 0,~\forall j\in\until{p}}\nonumber , \\
&= \setdefb{\bx\in\real^n}{\min_{j\in\until{p}}h_j(\bx)\geq 0},
\end{align}
forward invariant. With the proposed cost function, the safety filter achieves this while minimally modifying the nominal control input $\bk_\des(\bx)$ at each state~$\bx \in \Cc$.

The formulation above enforces safety for a conjunctive combination of individual safety constraints by ensuring that all CBF $h_j$ remain nonnegative along the trajectories. Beyond conjunctions, the combinatorial CBF framework~\cite{PO-HL-TGM-DP-ADA:25-csl} generalizes the standard CBF condition~\eqref{eq:CBF} by introducing an additional term, enabling more general logical combinations of safe sets. In particular, the resulting CBF-QP is given by:
\begin{align}\label{eq:comb-CBF-QP}
    \bk(\bx)  =\argmin_{\bu\in\Uc} \quad & \|\bu-\bk_\des(\bx)\|^2\\
     \textup{s.t.} \quad  \dot h_j(\bx,\bu) &\geq -\alpha(h(\bx)+|h_j(\bx)-h(\bx)|),~\forall i\in\until{p} \nonumber
\end{align}
where $h$ defining the safe set $\Cc$ as in~\eqref{eq:safe_set} is constructed via sorting \textit{primitive} CBFs $h_j$ to represent different logical combinations. For example, $h(\bx) = \min_{i\in\until{p}}h_i(\bx)$ corresponds to conjunction as seen earlier, while $h(\bx) = \max_{i\in\until{p}}h_i(\bx)$ corresponds to disjunction.

Despite the sophistication of existing CBF-based frameworks, the certified safe operating region~$\Cc$ obtained in practice is often conservative. For instance, a common approach to constructing such a set is to identify a stabilizing controller for a safe equilibrium point and to estimate its region of attraction (e.g., by using Lyapunov sublevel sets, see~\cite{LG-AKK-TGM:25}) contained within the constraint set~$\Sc$. Such estimates are conservative, and the resulting region is further reduced when input constraints are taken into account.

\subsection{Set Expansion via Backup CBF Framework}\label{sec:vanillaBackup}
The backup CBF framework~\cite{TG-MM-AS-PN-EF-ADA:20} aims to reduce this conservatism by leveraging the known safe region and an associated safeguarding controller. In particular, suppose that a set $\Cc$ in~\eqref{eq:safe_set} can be rendered forward invariant with a known \textit{backup} controller $\map{\bk_{\rm b}}{\real^n}{\Uc}$ through the state-feedback $\bu=\bk_{\rm b}(\bx)$. Consider the resulting \textit{backup} system dynamics:
\begin{equation}\label{sys:backup_dyn}
     \dot \bx = \bf_{\rm b}(\bx)\triangleq \bf(\bx)+\bg(\bx)\bk_{\rm b}(\bx).
\end{equation}
Let $\bphi_{\bk_{\rm b}}\!(t,\bx_0)$ be the \textit{backup} trajectory generated by \eqref{sys:backup_dyn} from an initial condition ${\bx_0\in\real^n}$. The set of states visited by this trajectory~${\Omega(\bx_0) \! \defeq \!\setdefb{\bphi_{\bk_{\rm b}}\!(\tau,\bx_0)}{ \tau \geq 0}}$ is control invariant. 

For tractability of subsequent results, we consider finite-horizon backup trajectories over a time interval $[0,T]$ for some $T>0$.
We define the \textit{implicit} safe set:
\begin{align}
    \label{eq:implicit_safe_set}
    \Cc_{\rm I} &\defeq \left\{\bx\in \real^n~\left|~
    \begin{array}{c}
    \psi(\bphi_{\bk_{\rm b}}\!(\tau,\bx))\geq 0,~\forall \tau\in[0,T] \\
    h(\bphi_{\bk_{\rm b}}\!(T,\bx))\geq 0 \\
    \end{array}
    \right.\right\} \nonumber, \\
    &= \setdefb{\bx\in \real^n}{h_{\rm I}(\bx)\geq 0},
\end{align}
where the function:
\begin{equation}\label{eq:implicit_CBF}
    h_{\rm I}(\bx) \defeq \min \Big\{\min_{\tau\in[0,T]}\psi(\bphi_{\bk_{\rm b}}\!(\tau,\bx)),h(\bphi_{\bk_{\rm b}}\!(T,\bx))\Big\},
\end{equation}
is introduced to facilitate the derivation of CBF conditions for controller synthesis. This set, ${\Cc_{\rm I} \subseteq \Sc}$, denotes the set of all states $\bx$ from which the control invariant set $\Cc$ can be reached \textit{safely}, under the controller $\bk_{\rm b}$. By expanding $\Cc$ with $\bk_{\rm b}$, $\Cc_{\rm I}$ is also control invariant \cite{TG-MM-AS-PN-EF-ADA:20}, and is rendered forward invariant by the backup controller. Since $h_{\rm I}$ is generally nonsmooth, it requires a set of conditions distinct from~\eqref{eq:CBF}.
\begin{definition}\longthmtitle{Implicit CBF}
    Given a continuously differentiable backup controller $\map{\bk_{\rm b}}{\real^n}{\Uc}$ that renders the set $\Cc$ in~\eqref{eq:safe_set} forward invariant, the function $\map{h_{\rm I}}{\real^n}{\real}$ in~\eqref{eq:implicit_CBF} is called an \textit{implicit CBF} for~\eqref{sys:ctrl_affine} associated with the controller $\bk_{\rm b}$ if there exists a class-$\Ke$ function $\alpha$ such that, for each $\bx\in\Cc_{\rm I}$, there exists a control input $\bu\in\Uc$ satisfying:
    \begin{subequations}\label{eq:implicit_CBF_cond}
    \begin{align}
        &\dot \psi_{\bk_{\rm b}}(\tau,\bx,\bu) > -\alpha(\psi(\bphi_{\bk_{\rm b}}(\tau, \bx))),~\forall \tau\in[0,T] \label{eq:traj_continuous}\\
        &\dot h_{\bk_{\rm b}}(T,\bx,\bu) > -\alpha(h(\bphi_{\bk_{\rm b}}(T, \bx))),
    \end{align}
    \end{subequations}
    where we use the shorthand notation:
    \begin{subequations}
        \begin{align}
            \dot \psi_{\bk_{\rm b}}(\tau,\bx,\bu)\defeq~ &L_\bf (\psi\circ \bphi_{\bk_{\rm b}}(\tau,\cdot))(\bx) \\
             &\quad+ L_\bg (\psi\circ \bphi_{\bk_{\rm b}}(\tau,\cdot))(\bx)\bu \nonumber,\\
            \dot h_{\bk_{\rm b}}(T,\bx,\bu)\defeq~
            &L_\bf (h\circ \bphi_{\bk_{\rm b}}(T,\cdot))( \bx) \\
            &\quad+ L_\bg (h\circ \bphi_{\bk_{\rm b}}(T,\cdot))(\bx)\bu \nonumber.
        \end{align} 
    \end{subequations}
\end{definition}
The function $h_{\rm I}$ is termed \emph{implicit} since the associated safe set $\Cc_{\rm I}$ is defined through the backup flow which can not typically be expressed in closed form. In principle, any controller satisfying the implicit CBF conditions~\eqref{eq:implicit_CBF_cond} guarantees forward invariance of the safe set $\Cc_{\rm I}$, cf.~\cite{TG-MM-AS-PN-EF-ADA:20}.
In practice, however, backup trajectories $\bphi_{\bk_{\rm b}}$ must be computed numerically and can only be evaluated at a finite number of points along a trajectory. To this end, we consider a uniform discretization of the interval $[0,T]$ with step size $\Delta \tau > 0$, and evaluate the trajectory at sampling times ${\tau_k = k\Delta \tau}$ for ${k \in \until{N}}$, where ${N \Delta \tau = T}$. The infinite collection of trajectory-level constraints in~\eqref{eq:implicit_CBF_cond} is approximated by a finite set of constraints enforced at these discrete sampling points:
\begin{subequations}
    \begin{align}
        &\dot \psi_{\bk_{\rm b}}(\tau_k,\bx,\bu) \geq -\alpha(\psi(\bphi_{\bk_{\rm b}}(\tau_k, \bx))),~\forall k\in\until{N} \label{eq:traj_discrete}\\
        &\dot h_{\bk_{\rm b}}(T,\bx,\bu) \geq -\alpha(h(\bphi_{\bk_{\rm b}}(T, \bx))).
    \end{align}
\end{subequations}
We accordingly define the \textit{discretized implicit safe set}:
\begin{align}
    \label{eq:implicit_safe_set_finite}
    \Cc_{{\rm I}}^{\rm fin} &\defeq \left\{\bx\in \real^n \left|
    \begin{array}{l}
    \psi(\bphi_{\bk_{\rm b}}\!(\tau_k,\bx))\geq 0,~\forall k\in \until N \\
    h(\bphi_{\bk_{\rm b}}\!(T,\bx))\geq 0 \\
    \end{array}
    \right.\right\} \nonumber,\\
    &= \setdefb{\bx\in \real^n}{h_{\rm I}^{\rm fin}(\bx)\geq 0},
\end{align}
with the function:
\begin{equation}\label{eq:implicit_CBF_finite}
    h_{\rm I}^{\rm fin}(\bx) \defeq \min \Big\{\min_{k \in \until N}\psi(\bphi_{\bk_{\rm b}}\!(\tau_k,\bx)),h(\bphi_{\bk_{\rm b}}\!(T,\bx))\Big\}.
\end{equation}
As safety conditions are considered only at discrete sampling times, the set $\Cc_{{\rm I}}^{\rm fin}$ is a relaxation of the continuous implicit safe set $\Cc_{\rm I}$. In this work, we adopt $\Cc_{{\rm I}}^{\rm fin}$ as the certified safe operating region and assume that rendering $\Cc_{{\rm I}}^{\rm fin}$ forward invariant is sufficient for the intended safety specification.

\begin{assumption}\longthmtitle{Practical Safety}\label{assump:implicit_finite_safe}
$\Cc_{{\rm I}}^{\rm fin}$ retains the control invariance property from $\mathcal{C}_{\rm I}$, and the satisfaction of ${\psi(\bphi_{\bk_{\rm b}}\!(\tau_k,\bx))\geq 0, \forall \nspace{1} k\in \until N}$ implies the satisfaction of $\psi(\bphi_{\bk_{\rm b}}\!(\tau,\bx))\geq 0,\forall \tau\in[0,T]$.
\end{assumption}
Recent work~\cite[Lem. 1]{TG-MM-AS-PN-EF-ADA:20} has investigated  formulations that account for the discretization by robustifying each constraint with additional margin terms. In this paper, we make the above simplifying assumptions and focus on the problem of handling multiple backup controllers for safe set expansion.

\section{Uniting Multiple Safe Sets}\label{sec:gen-combo-cbf}

Before addressing the implicit case, we first study how to unite multiple explicitly defined safe sets. Suppose that $p$ safe sets $\{\Cc_j\}_{j=1}^p$ are given, each of which is control invariant and is associated with a control barrier function $h_j$. A key observation underlying our approach is that the union of control invariant sets is itself control invariant. In particular, the aggregated safe set:
\begin{align}
\Cc_{\max} \triangleq \bigcup_{j\in \until p} \Cc_j =\setdefb{\bx\in\real^n}{h_{\max}(\bx) \geq 0},
\end{align}
where $h_{\max}(\bx)\defeq \max_{j\in\until p} h_j(\bx)$,
is also control invariant and therefore safe. Moreover, as reviewed in Sec.~\ref{sec:filter}, this aggregated safe set can be addressed using the combinatorial CBF framework. In particular, we can find a controller $\map{\bk}{\real^n}{\Uc}$
rendering the safe set $\Cc_{\max}$ forward invariant, provided that the given CBFs $\{h_j\}_{j=1}^p$ satisfy the combinatorial CBF condition, which requires, for each $\bx\in\Cc_{\max}$, the existence of $\bu\in\Uc$  satisfying:
\begin{equation}\label{eq:combo_CBF_condition}
\dot h_j(\bx,\bu) > -\alpha(h_{\max}(\bx)+|h_j(\bx)-h_{\max}(\bx)|),~\forall j\in\until{p}.
\end{equation}

The combinatorial CBF condition~\eqref{eq:combo_CBF_condition} must hold over the entire aggregated set $\Cc_{\max}$. On the other hand, each CBF only guarantees $\bu$ satisfying its respective constraint on $\Cc_j$. Consequently, its combination in~\eqref{eq:combo_CBF_condition} is not guaranteed. To address this issue, we introduce a relaxation mechanism.

\begin{lemma}\label{lem:individual-feasibility}
    Suppose that a function $\map{h_j}{\real^n}{\real}$ is a CBF for~\eqref{sys:ctrl_affine} under a given input constraint $\Uc\subseteq\real^m$. Then, for any function $\map{h}{\real^n}{\real}$ and any positive definite function $\map{\rho}{\real}{\realnonneg}$, there exists a pair of a control input $\bu\in\Uc$ and an auxiliary variable $\omega \geq 0$ such that the condition:
    \begin{equation}\label{eq:gen_combo_CBF_cond}
    \dot h_j(\bx,\bu) > -\alpha(h_j(\bx)) -\omega\rho(h_j(\bx)-h(\bx)),
    \end{equation}
    holds for any $\bx\in\Cc$ in the zero-superlevel set of $h$ in~\eqref{eq:safe_set}. 
\end{lemma}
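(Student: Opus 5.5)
The plan is a case split on whether $h_j(\bx)=h(\bx)$, using two facts: $\rho$ is positive definite, so $\rho(s)>0$ for $s\neq 0$, and $\omega$ is a free nonnegative scalar that can be chosen after $\bu$.

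Fix $\bx\in\Cc=\setdefb{\bx}{h(\bx)\geq 0}$ and distinguish two cases.

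\emph{Case 1: $h_j(\bx)=h(\bx)$.} Then $h_j(\bx)\geq 0$, so $\bx\in\Cc_j$. The CBF property of $h_j$ gives some $\bu\in\Uc$ with $\dot h_j(\bx,\bu)>-\alpha(h_j(\bx))$. Since $\rho(0)=0$, the extra term $\omega\rho(h_j(\bx)-h(\bx))$ vanishes for every $\omega$. Taking $\omega=0$ (or any $\omega\geq 0$), the condition~\eqref{eq:gen_combo_CBF_cond} follows directly.

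\emph{Case 2: $h_j(\bx)\neq h(\bx)$.} Then $\rho(h_j(\bx)-h(\bx))>0$ by positive definiteness. Pick any $\bu\in\Uc$; the set $\Uc$ is nonempty because $h_j$ being a CBF already furnishes admissible inputs. Define
\begin{equation*}
\omega \defeq \max\Big\{0,\ \frac{-\alpha(h_j(\bx))-\dot h_j(\bx,\bu)+1}{\rho(h_j(\bx)-h(\bx))}\Big\}.
\end{equation*}
This choice gives $-\alpha(h_j(\bx))-\omega\rho(h_j(\bx)-h(\bx))\leq \dot h_j(\bx,\bu)-1<\dot h_j(\bx,\bu)$, which is the desired strict inequality. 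No membership $\bx\in\Cc_j$ is needed here, which matters because $h_j(\bx)$ may be negative off $\Cc_j$.

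The main subtlety is Case 1. There the relaxation term is switched off, so the argument must show that $\bx$ actually lies in $\Cc_j$ in order to invoke the CBF condition for $h_j$. This is exactly where the hypothesis $\bx\in\Cc$, i.e.\ $h(\bx)\geq 0$, is used: it gives $h_j(\bx)=h(\bx)\geq 0$. I would also remark that $\alpha$ here is the class-$\Ke$ function associated with $h_j$, so it is defined on all of $\real$ and $\alpha(h_j(\bx))$ makes sense even for negative arguments in Case 2.
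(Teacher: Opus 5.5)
Your proposal is correct and follows the paper's proof essentially verbatim: the same case split on whether $h_j(\bx)=h(\bx)$, with $\omega$ taken large when $\rho(h_j(\bx)-h(\bx))>0$, and the CBF property of $h_j$ invoked via $h_j(\bx)=h(\bx)\geq 0$ when the relaxation term vanishes. Your explicit formula for $\omega$ and the remark that $\Uc$ is nonempty make precise what the paper leaves implicit.
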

\begin{proof}
    If ${h_j(\bx)\!\neq\! h(\bx)}$, then ${\rho(h_j(\bx)\!-\!h(\bx))}$ is strictly positive. In this case, $\omega$ can be chosen sufficiently large so that the inequality~\eqref{eq:gen_combo_CBF_cond} holds, regardless of the choice of $\bu$. 
    
    On the other hand, if ${h_j(\bx)= h(\bx)}$, then the inequality~\eqref{eq:gen_combo_CBF_cond} is equivalent to:
    ${\dot h_j(\bx,\bu) > -\alpha(h_j(\bx))}$. Since ${h_j(\bx)= h(\bx)\geq 0}$ implies that $\bx$ belongs to $\Cc_j$, the existence of a control input ${\bu\in\Uc}$ satisfying this inequality is guaranteed by the fact that $h_j$ is a CBF, regardless of the choice of $\omega$, concluding the proof.
\end{proof}
Lemma~\ref{lem:individual-feasibility} is inspired by feasibility-restoring constructions in the optimal-decay CBF framework~\cite{PO-MHC-TGM-ADA:25-cdc,JZ-BZ-ZL-KS:21}, in which auxiliary variables are introduced to relax barrier conditions while preserving safety guarantees. In our paper, the auxiliary variable $\omega$ plays an important role in scaling the relaxation term $\rho(\cdot)$ in order to recover feasibility of the combinatorial CBF condition outside the set $\Cc_j$. The introduction of $\rho$ here generalizes the absolute value appearing in the combinatorial CBF construction in~\eqref{eq:combo_CBF_condition}. Notably, placing the relaxation term outside the function $\alpha$ allows the auxiliary variable $\omega$ to enter the constraint linearly, which will be crucial for the optimization-based controller synthesis developed later.

It is important to note that while Lemma~\ref{lem:individual-feasibility} guarantees feasibility of each individual combinatorial CBF constraint for all $\bx \in \Cc_{\max}$, uniting multiple safe sets requires these constraints to be satisfied simultaneously. This observation motivates the introduction of a generalized combinatorial CBF construction, which we define next.

\begin{definition}\label{def:gen_combo_cbf}
    \longthmtitle{Generalized Combinatorial CBFs}
    A function $\map{h}{\real^n}{\real}$ constructed pointwise to take the $r$-th largest value among in a collection of functions $\{h_j\}_{j=1}^p$ as:
    \begin{equation}\label{eq:gen_combo_cbf}
    h(\bx)= \max^r \{h_j(\bx)\}_{j=1}^p,
    \end{equation}
    is a \textit{generalized combinatorial CBF ($p$-choose-$r$ CBF)} for system~\eqref{sys:ctrl_affine} under a given input constraint $\Uc\subseteq \real^m$ if there exists a class-$\Ke$ function $\alpha$ such that, for each $\bx \in \Cc$ in~\eqref{eq:safe_set}, there exists a control input $\bu\in\Uc$ and an auxiliary variable $\omega\in\realnonneg$ satisfying \eqref{eq:gen_combo_CBF_cond} simultaneously for all $j\in \until{p}$.~\hfill$\diamond$
\end{definition}

Definition~\ref{def:gen_combo_cbf} implicitly requires a compatibility property among the functions $\{h_j\}_{j=1}^p$ at each $\bx\in\Cc$, ensuring that the associated constraints can be satisfied simultaneously. However, note that the original combinatorial CBF framework in \eqref{eq:comb-CBF-QP} requires a global compatibility condition among all CBFs $\{h_j\}_{j=1}^p$ being combined. In contrast, the proposed generalized construction relaxes this compatibility requirement. The introduction of the auxiliary variable $\omega$ automatically ensures feasibility of the $j$-th inequality outside its safe region $\Cc_j$. As a result, compatibility is only required among CBFs that are \textit{active} at a given state. Consequently, the compatibility requirement reduces to the conjunctive (AND-type) compatibility of the \textit{active} CBFs on their regions of intersection. We formalize this statement below.

\begin{definition}\label{def:conj_compat}
    \longthmtitle{Conjunctive Compatibility}
    Given a collection of CBFs $\{h_j\}_{j=1}^p$ for system~\eqref{sys:ctrl_affine}, denote the set of \textit{active} indices at a state $\bx$ as
       $ \Ac(\bx) = \setdefb{j\in\until p}{h_j(\bx)\geq 0}.$
    The CBFs $\{h_j\}_{j=1}^p$ are said to be \textit{conjunctively compatible} at $\bx$ if there exists a control input $\bu\in\Uc$ such that:
    \begin{equation}\label{eq:conj_compat}
    \dot h_j(\bx,\bu) > -\alpha(h_j(\bx)),~\forall j\in\Ac(\bx).
    \end{equation}
    Furthermore, given a set $\Cc\subseteq\real^n$, the CBFs $\{h_j\}_{j=1}^p$ are \textit{conjunctively compatible} on $\Cc$ if they are conjunctively compatible at every $\bx\in\Cc$.~\hfill $\diamond$
\end{definition}
\begin{proposition}\longthmtitle{Conjunctive Compatibility Implies Generalized Combinatorial CBF}
    \label{prop:conj_compat}
    Let $h$ be constructed from CBFs $\{h_j\}_{j=1}^p$ for system~\eqref{sys:ctrl_affine} as in~\eqref{eq:gen_combo_cbf}. If CBFs $\{h_j\}_{j=1}^p$ are conjunctively compatible on the zero-superlevel set $\Cc$ of $h$ as defined in~\eqref{eq:safe_set}, then the function $h$ is a generalized combinatorial CBF. 
\end{proposition}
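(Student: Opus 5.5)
Fix an arbitrary $\bx\in\Cc$, i.e., $h(\bx)=\max^r\{h_j(\bx)\}_{j=1}^p\geq 0$. We must produce a single pair $(\bu,\omega)\in\Uc\times\realnonneg$ satisfying \eqref{eq:gen_combo_CBF_cond} for every $j\in\until p$ simultaneously. This is where the argument differs from Lemma~\ref{lem:individual-feasibility}: there, each $j$ could pick its own $\bu$. The plan is to choose $\bu$ first from conjunctive compatibility and then choose one $\omega$ large enough for all remaining indices.

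\emph{Step 1 (split indices).} Partition $\until p$ into $\Ic_0(\bx)\defeq\setdefb{j}{h_j(\bx)=h(\bx)}$ and its complement $\Ic_1(\bx)\defeq\setdefb{j}{h_j(\bx)\neq h(\bx)}$. Since $h(\bx)\geq 0$, every $j\in\Ic_0(\bx)$ satisfies $h_j(\bx)\geq 0$, so $\Ic_0(\bx)\subseteq\Ac(\bx)$. The set $\Ic_0(\bx)$ is nonempty because the order statistic is attained by some $h_j$, although this is not actually needed.

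\emph{Step 2 (choose $\bu$).} By conjunctive compatibility at $\bx$, there is $\bu^\star\in\Uc$ with $\dot h_j(\bx,\bu^\star)>-\alpha(h_j(\bx))$ for all $j\in\Ac(\bx)$, hence in particular for all $j\in\Ic_0(\bx)$. For $j\in\Ic_0(\bx)$ we have $\rho(h_j(\bx)-h(\bx))=\rho(0)=0$, since $\rho$ is positive definite. Therefore \eqref{eq:gen_combo_CBF_cond} holds for these indices regardless of $\omega$.

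\emph{Step 3 (choose $\omega$).} For $j\in\Ic_1(\bx)$, positive definiteness gives $\rho_j\defeq\rho(h_j(\bx)-h(\bx))>0$. With $\bu^\star$ now fixed, the left side $\dot h_j(\bx,\bu^\star)$ and the term $\alpha(h_j(\bx))$ are finite numbers. Set
\[
\omega^\star\defeq\max\Big\{0,\;\max_{j\in\Ic_1(\bx)}\frac{-\alpha(h_j(\bx))-\dot h_j(\bx,\bu^\star)+1}{\rho_j}\Big\},
\]
which is well defined because $\Ic_1(\bx)$ is finite. This choice makes \eqref{eq:gen_combo_CBF_cond} hold strictly for every $j\in\Ic_1(\bx)$, and raising $\omega$ further does not affect the indices in $\Ic_0(\bx)$. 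Hence $(\bu^\star,\omega^\star)$ works for all $j$ at $\bx$.

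The only genuine obstacle is ordering the choices correctly. Compatibility is guaranteed only on the active set $\Ac(\bx)$, not on all indices, so $\bu$ must be fixed before $\omega$, and $\omega$ must then absorb the inactive indices. Some indices are active but not tight ($h_j(\bx)\geq0$, $h_j(\bx)\neq h(\bx)$); these are covered both by $\bu^\star$ and by $\omega^\star$, so they cause no conflict. The class-$\Ke$ function $\alpha$ is the common one used in conjunctive compatibility, and it serves as the function required by Definition~\ref{def:gen_combo_cbf}. Since $\bx\in\Cc$ was arbitrary, $h$ is a generalized combinatorial CBF.
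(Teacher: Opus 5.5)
Your proof is correct and follows the paper's argument: fix $\bu$ from conjunctive compatibility, then choose one sufficiently large $\omega$ to absorb the finitely many remaining indices. The only difference is how you split the indices. You split by tightness ($h_j(\bx)=h(\bx)$, using $\rho(0)=0$), whereas the paper splits by activity ($h_j(\bx)\geq 0$, using only that $-\omega\rho\leq 0$, and $h_j(\bx)<0\leq h(\bx)$ for the remaining indices). As a result, your argument needs compatibility only on the tight indices, so it directly yields the refinement mentioned in Remark~\ref{rmk:compat}.
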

\begin{proof}
    From conjunctive compatibility, there exists, for each $\bx \in \Cc$, a control $\bu\in\Uc$ satisfying~\eqref{eq:conj_compat}. For such a $\bu$, the inequalities~\eqref{eq:gen_combo_CBF_cond} are satisfied for all $j\in\Ac(\bx)$ regardless of the choice of $\omega\geq 0$, since the relaxation term $-\omega\rho(\cdot)$ is nonpositive. It remains to show that there exists $\omega$ sufficiently large such that, with this choice of $\bu$, the inequalities~\eqref{eq:gen_combo_CBF_cond} also hold for all $j\not\in \Ac(\bx)$.
    
    For indices $j\not\in \Ac(\bx)$, the definition of $\Ac$ implies $h_j(\bx)< 0 \leq h(\bx)$, since $\bx\in\Cc$. Therefore, the expression $\rho(h_j(\bx)-h(\bx)) >0$ is strictly positive for each $j\notin \Ac(\bx)$. Then, because there are a finite number of inequalities, a single sufficiently large $\omega$ can be chosen so that the inequalities~\eqref{eq:gen_combo_CBF_cond} hold for all $j\not\in \Ac(\bx)$. Hence, this pair $(\bu,\omega)$ satisfies inequalities~\eqref{eq:gen_combo_CBF_cond} for all $j\in\until p$, concluding the proof.
\end{proof}
\begin{remark}\label{rmk:compat}
\longthmtitle{On the Notion of Compatibility}
    Definition~\ref{def:conj_compat} can be further refined. First, the notion of conjunctive compatibility can be tightened by defining the active index set as ${\Ic(\bx) = \setdefb{j \in \until{p}}{h_j(\bx) = h(\bx)}}$, in which case Proposition~\ref{prop:conj_compat} continues to hold. Moreover, ideas from the optimal-decay CBF framework~\cite{PO-MHC-TGM-ADA:25-cdc,JZ-BZ-ZL-KS:21} can be incorporated to further generalize the compatibility notion. A thorough investigation of these extensions is beyond the scope of this paper. For clarity of exposition, we adopt the standard conjunctive notion of compatibility in this paper.~\hfill$\bullet$    
\end{remark}
We now show that the generalized combinatorial CBF condition yields a safety guarantee when enforced through an optimization-based controller.

\begin{theorem}\label{thm:gen_combo_cbf_safety}\longthmtitle{Safety from Generalized Combinatorial CBF}
    Consider the control-affine system~\eqref{sys:ctrl_affine}. Let $h$ be constructed from CBFs $\{h_j\}_{j=1}^p$ for system~\eqref{sys:ctrl_affine} as in~\eqref{eq:gen_combo_cbf}. If $h$ is a generalized combinatorial CBF for~\eqref{sys:ctrl_affine}, then its associated zero-superlevel set~$\Cc$, as defined in~\eqref{eq:safe_set}, is control invariant.

    In particular, consider the CBF-QP given by:
    \begin{align}\label{eq:gen-comb-CBF-QP}
    &\bk(\bx)  =\argmin_{\bu\in\Uc,~\omega\geq 0} \quad  \|\bu-\bk_\des(\bx)\|^2+c_\omega\omega^2\\
     &\textup{s.t.} \quad  \dot h_j(\bx,\bu) \geq -\alpha(h_j(\bx)) -\omega\rho(h_j(\bx)-h(\bx)),~\forall j\in\until p. \nonumber
    \end{align}
    with a weight ${c_\omega>0}$. Under the above assumptions, the CBF-QP is continuous at each ${\bx\in\Cc}$, and the set $\Cc$ is forward invariant for the closed-loop system under ${\bu=\bk(\bx)}$.
\end{theorem}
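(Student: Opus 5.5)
Control invariance follows from forward invariance under the closed-loop controller, so the plan is to establish the second part of the statement: (i) continuity of the CBF-QP~\eqref{eq:gen-comb-CBF-QP}, and (ii) forward invariance of $\Cc$ under $\bu=\bk(\bx)$. Continuity also gives existence of closed-loop solutions.

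\textbf{Step 1: Slater's condition for the QP.} Since $h$ is a generalized combinatorial CBF, at each $\bx\in\Cc$ Definition~\ref{def:gen_combo_cbf} supplies a pair $(\bu,\omega)\in\Uc\times\realnonneg$ satisfying all $p$ inequalities~\eqref{eq:gen_combo_CBF_cond} strictly. The QP in the joint variable $(\bu,\omega)$ has a strictly convex cost, because $c_\omega>0$. Its constraints are affine in $(\bu,\omega)$, and $\Uc\times\realnonneg$ is polyhedral. I would then invoke the continuity result of~\cite{PM-AA-JC:25}, as cited in Sec.~\ref{sec:filter}: a parametric QP whose constraint data depend continuously on $\bx$ and which satisfies Slater's condition at $\bx$ has a unique minimizer that is continuous at $\bx$. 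The data here are $L_\bf h_j$, $L_\bg h_j$, $\alpha(h_j)$, and $\rho(h_j-h)$. Continuity of $\rho(h_j-h)$ requires $\rho$ and the order statistic $h=\max^r\{h_j\}$ to be continuous. The latter holds because order statistics of continuous functions are continuous; I will assume $\rho$ is continuous.

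\textbf{Main obstacle.} Slater's condition at $\bx$ concerns strict feasibility with respect to the inequality constraints $\dot h_j\geq\cdots$, while $\bu\in\Uc$ and $\omega\geq 0$ are treated as polyhedral set constraints. These need only be satisfied, not strictly, which is how~\cite{PM-AA-JC:25} handles polytopic $\Uc$. I would also check that strict feasibility persists in a neighborhood of $\bx$ by continuity, so that $\bk$ is well defined near $\Cc$, not only on it.

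\textbf{Step 2: Forward invariance.} Let $(\bk(\bx),\omega(\bx))$ be the optimizer. Fix a trajectory starting in $\Cc$, and suppose it leaves $\Cc$. Then $h$, which is continuous along the trajectory, must reach zero at some boundary time $t^\ast$ and become negative just afterward. At any $\bx$ with $h(\bx)=0$, consider the indices $j$ with $h_j(\bx)=h(\bx)=0$. For these the relaxation term vanishes, since $\rho(0)=0$, so the QP enforces $\dot h_j\geq -\alpha(0)=0$ regardless of $\omega$.

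I would then run the standard order-statistic argument from the combinatorial CBF framework~\cite{PO-HL-TGM-DP-ADA:25-csl}. Near $t^\ast$, $h(\bx(t))$ equals one of the finitely many $h_j(\bx(t))$ that coincide with it at $t^\ast$. Each such $h_j$ has nonnegative derivative at every time where it attains the value $h$. A comparison or Nagumo-type argument then shows that $h(\bx(t))$ cannot strictly decrease below $0$. A cleaner route is to apply the comparison lemma to each active $h_j$ using $\dot h_j\geq -\alpha(h_j)$ wherever $h_j=h$, and conclude that the $r$-th largest value stays nonnegative. This last step, handling switching of which $h_j$ attains the order statistic, is the delicate one. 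I would treat it via the Dini derivative of $h$, bounding it below by the minimum of $\dot h_j$ over the indices attaining $h$.
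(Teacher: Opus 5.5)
Your proposal is correct. Its continuity part matches the paper's:
\begin{itemize}
\item strict feasibility from Definition~\ref{def:gen_combo_cbf};
\item persistence of the same pair $(\bu,\omega)$ on an open neighborhood $\Ec\supset\Cc$;
\item Slater's condition fed into~\cite{PM-AA-JC:25}.
\end{itemize}
Your explicit assumption that $\rho$ is continuous is one the paper uses tacitly.

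The forward-invariance step takes a different technical route. The paper works with the Clarke generalized gradient of $h$. It notes that $\rho(0)=0$ gives every $\nabla h_j$ with $j\in\Ic(\bx)$ the common lower bound $-\alpha(h(\bx))$ along the closed-loop field. It extends this bound to the convex hull by linearity and then invokes the nonsmooth barrier theorem~\cite[Prop.~2]{PG-JC-ME:17}.

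You argue instead along a closed-loop solution, using the finite-selection structure of the order statistic. For small $s>0$, $h(\bx(t+s))=h_{j_s}(\bx(t+s))$ for some $j_s\in\Ic(\bx(t))$. Hence the lower right Dini derivative satisfies $D_+h(\bx(t))\geq\min_{j\in\Ic(\bx(t))}\dot h_j\geq-\alpha(h(\bx(t)))$.

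Both routes rest on the same observation: the relaxation vanishes for indices attaining $h$.
\begin{itemize}
\item Yours is elementary and self-contained. It never needs a formula for $\partial h$. For $1<r<p$ such a formula is in general only the inclusion $\partial h(\bx)\subseteq\mathrm{conv}\{\nabla h_j(\bx)\}_{j\in\Ic(\bx)}$, although the inclusion is all the paper uses. For example, the median of $x,-x,0$ is identically zero.
\item The paper's route is shorter, and it is reused almost verbatim for Theorem~\ref{thm:unite_implicit}.
\end{itemize}

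When you write it up, state explicitly that the QP constraints are used on $\Ec\setminus\Cc$, not only where $h=0$. The boundary fact $\dot h_j\geq 0$ when $h_j=h=0$ is a non-strict Nagumo condition. On its own it does not exclude a solution that is tangent to $\partial\Cc$ and then exits; your first sketch (``nonnegative derivative at every time where it attains the value $h$'') is imprecise on this point.

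The clean finish goes as follows:
\begin{enumerate}
\item Suppose a solution leaves $\Cc$. Choose $t_1$ shortly after the exit time with $h(\bx(t_1))<0$ and $\bx(t)\in\Ec$ up to $t_1$.
\item Let $t_0$ be the last time before $t_1$ with $h=0$.
\item On $(t_0,t_1]$ you have $D_+h\geq-\alpha(h)>0$, so $h\circ\bx$ is nondecreasing on $[t_0,t_1]$. This contradicts $h(\bx(t_1))<0$.
\end{enumerate}
This argument avoids any non-uniqueness issue in the comparison lemma when $\alpha$ is merely continuous. It also covers every closed-loop solution, which matters because continuity of $\bk$ gives existence of solutions but not uniqueness.
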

\begin{proof}
    Our forward invariance proof relies on showing that $h$ satisfies the nonsmooth barrier function condition for the closed-loop system~\cite[Prop. 2]{PG-JC-ME:17}. To this end, we first note that, by Definition~\ref{def:gen_combo_cbf}, the generalized combinatorial CBF condition is imposed with strict inequalities.
    As such, continuity of the functions $h_j$ and their Lie derivatives ensures strict feasibility persists on an open neighborhood $\Ec\supset\Cc$. Hence, the CBF-QP remains feasible and well-defined on $\Ec$.
    
    Since $h$ is constructed via sorting continuously differentiable functions $h_j$, it is nonsmooth, yet it is locally Lipschitz. Therefore, $h$ admits a Clarke generalized gradient~\cite{FHC:83} at nonsmooth points. The generalized gradient of $h$ at $\bx$ is the set
    $
    \partial h(\bx) = \textrm{conv}  \cup_{j\in\Ic(\bx)} \left\{\nabla h_j(\bx)\right\}
    $
    where: 
    $$
    \Ic(\bx) = \setdefb{j\in\until p}{h_j(\bx)=h(\bx)}
    $$
    is the set of indices of $h_j$ with the same value as~$h$.

    Under ${\bu=\bk(\bx)}$, the QP constraints are satisfied for all ${j\in \until p}$. For all ${j\!\in\! \Ic(\bx)}$, we have ${h_j(\bx)\!-\!h(\bx)\!=\!0}$, and thus:
    $$
    \langle \nabla h_j(\bx), \bf(\bx)+\bg(\bx)\bk(\bx) \rangle \geq - \alpha(h(\bx)).
    $$
    Then since $\partial h(\bx)$ is the convex hull and the inner product is linear, the same lower bound as the above holds:
    $$
    \langle \xi, \bf(\bx)+\bg(\bx)\bk(\bx) \rangle \geq - \alpha(h(\bx)).
    $$
    for all ${\xi \in \partial h(\bx)}$ and all ${\bx\in \Ec}$. Thus, the function $h$ satisfies the nonsmooth barrier function condition on a neighborhood ${\Ec\supset\Cc}$, guaranteeing that $\Cc$ is forward invariant for the closed-loop system~\cite[Prop. 2]{PG-JC-ME:17}. Control invariance of  system~\eqref{sys:ctrl_affine} directly follows from the existence of control signal $\bu(t)=\bk(\bphi(t,\bx_0))$ for any $\bx_0\in\Cc$ satisfying Definition \ref{def:contr-invariance}.

    In addition, Definition~\ref{def:gen_combo_cbf} ensures the existence of $\bu$ strictly satisfying the constraints in the CBF-QP for each $\bx\in\Cc$. In other words, at each $\bx\in\Cc$, the CBF-QP satisfies Slater's condition, so it is continuous~\cite{PM-AA-JC:25} on $\Cc$. 
\end{proof}

The proposed generalized combinatorial CBF framework provides a less restrictive mechanism for combining multiple CBFs. In particular, within the CBF-QP framework, the auxiliary variable $\omega$ is computed online, as an additional decision variable in~\eqref{eq:gen-comb-CBF-QP}, automatically restoring feasibility whenever necessary. Theorem~\ref{thm:gen_combo_cbf_safety} establishes safety guarantees  for the general $p$-choose-$r$ constructions considered in~\cite{PO-HL-TGM-DP-ADA:25-csl}, enabling flexible logical combination of safe sets.

For the purpose of this paper, however, we focus primarily on the disjunctive (OR) case, corresponding to~${r=1}$, which represents the union of safe sets. In this setting, the combinatorial CBF construction offers a natural way to unite multiple certified safe regions under a single continuous controller, without requiring explicit blending or switching among individual feedback laws.
\begin{corollary}\label{cor:unite_safe_sets}\longthmtitle{Uniting Multiple Safe Sets}
    Consider the control-affine system~\eqref{sys:ctrl_affine}. Given multiple sets $\{\Cc_j\}_{j=1}^p$ with safety verified by associated CBFs $\{h_j\}_{j=1}^p$, the union of the safe sets $\Cc_{\max} = \cup_{j=1}^p\Cc_j$ is safe.

    In particular, under the assumption that the CBFs $\{h_j\}_{j=1}^p$ are conjunctively compatible, the CBF-QP~\eqref{eq:gen-comb-CBF-QP} is continuous at each $\bx\in\Cc$, and the set $\Cc$ is forward invariant for the closed-loop system under $\bu=\bk(\bx)$.~\hfill$\blacksquare$
\end{corollary}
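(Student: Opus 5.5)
The plan is to specialize Proposition~\ref{prop:conj_compat} and Theorem~\ref{thm:gen_combo_cbf_safety} to the disjunctive case $r=1$. With $r=1$, the construction~\eqref{eq:gen_combo_cbf} gives $h(\bx)=\max^1\{h_j(\bx)\}_{j=1}^p = h_{\max}(\bx)$. Its zero-superlevel set is therefore exactly $\Cc=\Cc_{\max}=\cup_{j=1}^p\Cc_j$, since $h_{\max}(\bx)\geq 0$ holds if and only if $h_j(\bx)\geq 0$ for some $j$. The corollary's set $\Cc$ is this union, so the statement reduces to checking the hypotheses of the earlier results for this particular $h$.

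\emph{First claim (safety of the union).} I would argue directly from Definition~\ref{def:contr-invariance}, without any compatibility assumption. Each $\Cc_j$ is control invariant because $h_j$ is a CBF. Fix $\bx_0\in\Cc_{\max}$ and pick any $j$ with $\bx_0\in\Cc_j$. Control invariance of $\Cc_j$ gives an admissible signal $\bu(t)\in\Uc$ whose trajectory stays in $\Cc_j\subseteq\Cc_{\max}$ for all $t\geq 0$. Hence $\Cc_{\max}$ is control invariant. For the inclusion in $\Sc$, each certified $\Cc_j$ satisfies $\Cc_j\subseteq\Sc$ by the paper's notion of a safe set, so $\Cc_{\max}\subseteq\Sc$ and $\Cc_{\max}$ is safe.

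\emph{Second claim (continuity and forward invariance under the CBF-QP).} Assuming conjunctive compatibility on $\Cc_{\max}$, Proposition~\ref{prop:conj_compat} with $r=1$ shows that $h_{\max}$ is a generalized combinatorial CBF. Theorem~\ref{thm:gen_combo_cbf_safety} then applies verbatim to the QP~\eqref{eq:gen-comb-CBF-QP} with $h=h_{\max}$. It yields continuity of $\bk$ at each $\bx\in\Cc_{\max}$, via Slater's condition, and forward invariance of $\Cc_{\max}$, via the nonsmooth barrier argument.

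The main point requiring care is the class-$\Ke$ function. Definition~\ref{def:conj_compat} and each CBF may come with their own $\alpha_j$, whereas Definition~\ref{def:gen_combo_cbf} uses a single $\alpha$. I would take a common $\alpha$ with $\alpha\geq\alpha_j$ on $[0,\infty)$, for instance $\alpha=\max_j\alpha_j$, which is still class-$\Ke$. This is harmless because the active indices satisfy $h_j(\bx)\geq 0$, and a larger $\alpha$ only weakens~\eqref{eq:conj_compat} there. With that choice, the corollary follows immediately.
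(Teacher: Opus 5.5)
Your proposal is correct and follows essentially the paper's route. The first claim is the paper's observation that a union of control invariant sets, each contained in $\Sc$, is control invariant; the second is Proposition~\ref{prop:conj_compat} followed by Theorem~\ref{thm:gen_combo_cbf_safety} specialized to $r=1$, where $h=h_{\max}$ and $\Cc=\Cc_{\max}$. Your choice of a common class-$\Ke$ function $\alpha=\max_j\alpha_j$ is valid bookkeeping that the paper leaves implicit, since Definition~\ref{def:conj_compat} already assumes a single $\alpha$.
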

The first statement of Corollary~\ref{cor:unite_safe_sets} follows directly from the fact that the union of control invariant sets is itself control invariant. The CBF-QP~\eqref{eq:gen-comb-CBF-QP}, however, provides a constructive safety filter that remains close to the nominal controller. An advantage of this approach is that it preserves continuity of the resulting feedback law, thereby avoiding the need for explicit switching, blending, or hybrid control analysis when transitioning between safe regions. The conjunctive compatibility assumption serves as a sufficient condition under which such a continuous controller can be constructed over the aggregated safe set (see also Remark~\ref{rmk:compat}).

\section{Uniting Multiple Implicit Safe Sets}\label{sec:multi-implicit-set}
In this section, we extend the generalized combinatorial CBF framework developed in Sec.~\ref{sec:gen-combo-cbf} to implicit safe sets arising from multiple backup controllers. While a single backup construction expands a certified region, it may still yield a conservative operating set. Our objective is to aggregate multiple implicit safe sets into a larger certified region enforceable through a single optimization-based safety filter.

\begin{remark}\longthmtitle{On Multiple Safe Sets}
In many applications, it is easier to certify several \emph{small} safe sets than to directly construct a single large one. For example, stabilizing multiple equilibria, each with a distinct controller, is often simpler than identifying a single equilibrium that yields a large safe Lyapunov sublevel set. The results of this section provide a principled mechanism to aggregate them.
\end{remark}

Suppose multiple discretized implicit safe sets~$\{\Cc_{{\rm I},j}^{\rm fin}\}_{j=1}^p$ are constructed from CBFs $\{h_j\}_{j=1}^p$ with associated backup controllers $\{\bk_{{\rm b},j}\}_{j=1}^p$ as in~\eqref{eq:implicit_safe_set_finite}. While the results of Sec.~\ref{sec:gen-combo-cbf} establish a mechanism for uniting multiple explicitly defined safe sets, they are not directly applicable in the implicit case. In particular, a naive approach would be to first unite the explicit safe sets certified by CBFs $\{h_j\}_{j=1}^p$ using our approach introduced in Sec.~\ref{sec:gen-combo-cbf}, and then apply the backup CBF framework of Sec.~\ref{sec:vanillaBackup} to expand the resulting aggregated safe set. However,  the backup CBF construction requires a continuously differentiable safeguarding (backup) controller in order to define the associated flow and evaluate the trajectory-level Lie derivatives. In contrast, the controller produced by the CBF-QP~\eqref{eq:gen-comb-CBF-QP} is, in general, only continuous and may be nonsmooth.

Our approach directly combines the implicit safe sets $\{\Cc_{{\rm I},j}^{\rm fin}\}_{j=1}^p$ to obtain an aggregated implicit safe set:
\begin{align}\label{eq:aggregated_implicit_safe_set-finite}
\Cc_{\rm agg}^{\rm fin} &= \bigcup_{j=1}^p \Cc_{{\rm I},j}^{\rm fin} = \bigcup_{j=1}^p \setdefb{\bx\in\real^n}{h_{{\rm I},j}^{\rm fin}(\bx)\geq 0} \nonumber,\\
&= \setdefb{\bx \in \real^n}{h_{\rm agg}^{\rm fin}(\bx)\defeq \max_{j\in\until p} h_{{\rm I},j}^{\rm fin}(\bx) \geq 0}.
\end{align}
Recall here that each $h_{{\rm I},j}^{\rm fin}$ is itself defined as a minimum over trajectory constraints, cf.~\eqref{eq:implicit_CBF_finite}. Consequently, the aggregated function~$h_{\rm agg}^{\rm fin}$ exhibits a nested max-min structure, corresponding to a disjunctive logical combination (OR) applied to conjunctive logical combinations (AND). Nested logical compositions of CBFs were studied in~\cite{PO-HL-TGM-DP-ADA:25-csl}. In general, a two-level nested composition can be expressed through sorting operations of the form:
\begin{equation}
    h(\bx)=\max^{r_2} \left\{\max^{r_{1,j}} \{h_{j,k}(\bx)\}_{k=1}^{p_{1,j}}\right\}_{j=1}^{p_2},
\end{equation}
where for each outer index $j\in\until{p_2}$, the functions $\{h_{j,k}\}_{k=1}^{p_{1,j}}$ define a collection of $p_{1,j}$ barrier functions whose $r_{1,j}$-th order statistic encodes an inner logical combination. The outer sorting over $j$ then selects the $r_2$-th order statistic across the resulting groups.

Although this nested structure may appear intricate, it can be handled within the generalized combinatorial CBF framework developed in Sec.~\ref{sec:gen-combo-cbf}. In particular, the CBF-QP~\eqref{eq:gen-comb-CBF-QP} with constraints on functions $h_{j,k}$ can render the zero-superlevel set $\Cc$ associated with the function $h$ safe, by establishing that $h$ is a nonsmooth barrier function. Further, the introduction of auxiliary variables similarly restores feasibility under a conjunctive compatibility assumption among the primitive CBFs $h_{j,k}$. We omit the formalization of these results due to space limitations and instead focus on their specialization to the problem of combining backup CBFs. 

\begin{definition}\label{def:agg_implicit_cbf}
    \longthmtitle{Aggregated Implicit CBF - discretized version}
    A function ${\map{h_{\rm agg}^{\rm fin}}{\real^n}{\real}}$ defined as in \eqref{eq:aggregated_implicit_safe_set-finite}, i.e., as the pointwise maximum of the discretized implicit CBFs $\{h_{{\rm I},j}^{\rm fin}\}_{j=1}^p$, is called an \textit{aggregated implicit CBF} for system~\eqref{sys:ctrl_affine} under a given input constraint $\Uc\subseteq \real^m$ if there exists a class-$\Ke$ function $\alpha$ and a positive definite function~$\rho$ such that, for each $\bx \in \Cc_{\rm agg}^{\rm fin}$ in~\eqref{eq:aggregated_implicit_safe_set-finite}, there exists a control input $\bu\in\Uc$ and an auxiliary variable $\omega\in\realnonneg$ satisfying:
    \begin{subequations}\label{eq:gen-combo-implicit-CBF-finite-condition}
        \begin{align}
            &\dot \psi_{\bk_{\rm b},j}(\tau_k,\bx,\bu) > -\alpha(\psi(\bphi_{\bk_{\rm b},j}(\tau_k, \bx)))\\
            &\qquad\qquad\qquad-\omega\rho\left(\psi(\bphi_{\bk_{\rm b},j}(\tau_k, \bx))-h_{\rm agg}^{\rm fin}(\bx)\right),\nonumber\\
            &\dot h_{\bk_{\rm b},j}(T,\bx,\bu) > -\alpha(h_j(\bphi_{\bk_{\rm b},j}(T, \bx)))\\
            &\qquad\qquad\qquad-\omega\rho\left(h_j(\bphi_{\bk_{\rm b},j}(T, \bx))-h_{\rm agg}^{\rm fin}(\bx)\right). \nonumber
        \end{align}
    \end{subequations}
    simultaneously for all timestep $k\in \until{N}$ and all backup controllers indexed by $j\in \until{p}$.~\hfill$\diamond$
\end{definition}

The function $\rho$ in Definition~\ref{def:agg_implicit_cbf} may be chosen as any positive definite function, and serves to scale the relaxation term that restores feasibility of the nested barrier constraints. Similar to the generalized combinatorial CBF framework developed in Sec.~\ref{sec:gen-combo-cbf}, the above definition implicitly requires a compatibility condition among the implicit CBFs. This implies that there exists a common backup controller that renders the sets $\{\mathcal{C}_j\}_{j=1}^p$ forward invariant. For brevity, we do not restate the notion of conjunctive compatibility in this setting and instead proceed directly to the main safety result.

\begin{theorem}\label{thm:unite_implicit}
\longthmtitle{Uniting Multiple Implicit Safe Sets}
Consider the control-affine system~\eqref{sys:ctrl_affine}. Let $h_{\rm agg}^{\rm fin}$ defined as in \eqref{eq:aggregated_implicit_safe_set-finite} be an aggregated implicit CBF for system~\eqref{sys:ctrl_affine} under a given input constraint $\Uc\subseteq \real^m$. Then, consider the CBF-QP:
    \begin{align}\label{eq:gen-combo-implicit-CBF-QP-finite}
    \bk(\bx)  &=\argmin_{\bu\in\Uc,~\omega\geq 0} \quad  \|\bu-\bk_\des(\bx)\|^2+c_\omega\omega^2\\
        \textup{s.t.}~&\quad \textup{Constraints~\eqref{eq:gen-combo-implicit-CBF-finite-condition} with non-strict inequalities} \nonumber
    \end{align}
The constraints are enforced for all timestep $k\in \until{N}$ and for all backup controller indexed by $j\in \until{p}$. The CBF-QP~\eqref{eq:gen-combo-implicit-CBF-QP-finite} is continuous at each $\bx\in\Cc_{\rm agg}^{\rm fin}$, and the set $\Cc_{\rm agg}^{\rm fin}$ is forward invariant for the closed-loop system under $\bu=\bk(\bx)$.
\end{theorem}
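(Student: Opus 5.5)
The plan is to reduce Theorem~\ref{thm:unite_implicit} to the argument of Theorem~\ref{thm:gen_combo_cbf_safety} by viewing $h_{\rm agg}^{\rm fin}$ as a sorting-type combination of finitely many continuously differentiable primitive functions. Specifically, define for each $j\in\until p$ and $k\in\until N$ the primitives $g_{j,k}(\bx)\defeq\psi(\bphi_{\bk_{\rm b},j}(\tau_k,\bx))$ and $g_{j,N+1}(\bx)\defeq h_j(\bphi_{\bk_{\rm b},j}(T,\bx))$. Since $\bf,\bg$ and each backup controller $\bk_{{\rm b},j}$ are continuously differentiable, the backup flows are $C^1$ in the initial condition, so each $g_{j,k}$ is $C^1$. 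Moreover, $h_{\rm agg}^{\rm fin}=\max_j\min_k g_{j,k}$, and the constraints \eqref{eq:gen-combo-implicit-CBF-finite-condition} are exactly $\dot g_{j,k}(\bx,\bu)>-\alpha(g_{j,k}(\bx))-\omega\rho(g_{j,k}(\bx)-h_{\rm agg}^{\rm fin}(\bx))$, with the time derivatives given by the Lie-derivative shorthands.

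\textbf{Continuity.} Definition~\ref{def:agg_implicit_cbf} guarantees, at each $\bx\in\Cc_{\rm agg}^{\rm fin}$, a pair $(\bu,\omega)$ satisfying all constraints strictly. The QP \eqref{eq:gen-combo-implicit-CBF-QP-finite} has a strongly convex objective in $(\bu,\omega)$, because $c_\omega>0$. Its constraints are affine in $(\bu,\omega)$, with coefficients that are continuous in $\bx$: here $h_{\rm agg}^{\rm fin}$ is continuous, being a max-min of continuous functions, and $\alpha,\rho$ are continuous. Hence Slater's condition holds, and by \cite{PM-AA-JC:25} the solution map is continuous at each such $\bx$. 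By continuity, strict feasibility also persists on an open neighborhood $\Ec\supset\Cc_{\rm agg}^{\rm fin}$, so $\bk$ is well defined there.

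\textbf{Invariance.} The key step is verifying the nonsmooth barrier condition of \cite[Prop. 2]{PG-JC-ME:17} for the nested max-min function. The function $h_{\rm agg}^{\rm fin}$ is locally Lipschitz. Its Clarke generalized gradient at $\bx$ satisfies
$$\partial h_{\rm agg}^{\rm fin}(\bx)\subseteq \textrm{conv}\big\{\nabla g_{j,k}(\bx)\;|\;g_{j,k}(\bx)=h_{\rm agg}^{\rm fin}(\bx)\big\}.$$
This follows from the max and min rules of Clarke calculus: the outer max keeps only the indices $j$ with $\min_k g_{j,k}(\bx)=h_{\rm agg}^{\rm fin}(\bx)$, and within such a $j$, the inner min keeps only the indices $k$ attaining that minimum. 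For every such active pair $(j,k)$ we have $\rho(0)=0$, so the QP constraint gives $\langle\nabla g_{j,k}(\bx),\bf(\bx)+\bg(\bx)\bk(\bx)\rangle\geq-\alpha(h_{\rm agg}^{\rm fin}(\bx))$. By linearity, this bound extends to every $\xi$ in the convex hull, and hence to all of $\partial h_{\rm agg}^{\rm fin}(\bx)$, for $\bx\in\Ec$. Together with continuity of $\bk$ (which gives existence of closed-loop solutions), \cite[Prop. 2]{PG-JC-ME:17} then yields forward invariance of $\Cc_{\rm agg}^{\rm fin}$.

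\textbf{Main obstacle.} I expect the main obstacle to be justifying the gradient inclusion for the nested composition. Mixing max and min means the gradient is not given by a simple equality formula, so I will rely on the chain of upper inclusions from Clarke calculus, which suffices for a lower bound on directional derivatives. A secondary point is ensuring $h_{\rm agg}^{\rm fin}$ is regular enough for \cite[Prop. 2]{PG-JC-ME:17}. If regularity fails, I would instead argue directly via the comparison lemma on the upper-right Dini derivative: along a closed-loop trajectory, $\frac{d}{dt}h_{\rm agg}^{\rm fin}$ is bounded below by the minimum over the active pieces. This holds because for a max-min of $C^1$ functions, the one-sided directional derivative equals $\max_j\min_k$ of the active directional derivatives, each of which is at least $-\alpha(h_{\rm agg}^{\rm fin})$.
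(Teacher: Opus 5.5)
Your proposal is correct and follows essentially the same route as the paper. The paper also forms the active index sets $\Ic_\psi(\bx)$ and $\Ic_h(\bx)$, which are your pairs $(j,k)$ with $g_{j,k}(\bx)=h_{\rm agg}^{\rm fin}(\bx)$. It likewise uses $\rho(0)=0$ to reduce the constraints at those pairs to the common bound $-\alpha(h_{\rm agg}^{\rm fin}(\bx))$, extends that bound to all of $\partial h_{\rm agg}^{\rm fin}(\bx)$ by convexity on the strict-feasibility neighborhood $\Ec$, invokes \cite[Prop. 2]{PG-JC-ME:17}, and takes continuity from Slater's condition as in Theorem~\ref{thm:gen_combo_cbf_safety}. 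Two of your additions are sound and make explicit steps the paper leaves implicit: your justification of the Clarke-gradient inclusion for the nested max-min (only the inclusion is needed), and your directional-derivative fallback, which removes any regularity concern.
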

\begin{proof}
    The proof follows the same structure as that of Theorem~\ref{thm:gen_combo_cbf_safety}. We consider the index sets:
    \begin{align*}
    \Ic_\psi(\bx) &= \setdefb{(j,k)\in\until p \times \until N}{\psi(\bphi_{\bk_{\rm b},j}(\tau_k, \bx))=h_{\rm agg}^{\rm fin}(\bx)},\\
    \Ic_h(\bx) &= \setdefb{j\in\until p }{h_j(\bphi_{\bk_{\rm b},j}(T, \bx))=h_{\rm agg}^{\rm fin}(\bx)}.
    \end{align*}
    At each $\bx\in\Ec$, where $\Ec$ is a neighborhood of $\Cc_{\rm agg}^{\rm fin}$ where the CBF-QP remains strictly feasible, the constraints enforced by the CBF-QP reduce to:
    \begin{align*}
    &\dot \psi_{\bk_{\rm b},j}(\tau_k,\bx,\bk(\bx)) \geq -\alpha(h_{\rm agg}^{\rm fin}(\bx)),\\
    &\dot h_{\bk_{\rm b},i}(T,\bx,\bk(\bx)) \geq -\alpha(h_{\rm agg}^{\rm fin}(\bx)),
    \end{align*}
    for each $(j,k)\in\Ic_\psi(\bx)$ and $i\in\Ic_h(\bx)$. The common lower bound ensures for all $\xi \in \partial h_{\rm agg}^{\rm fin}(\bx)$:
    $$
    \langle \xi, \bf(\bx)+\bg(\bx)\bk(\bx) \rangle \geq - \alpha(h_{\rm agg}^{\rm fin}(\bx)).
    $$
    This establishes $h_{\rm agg}^{\rm fin}$ as a nonsmooth barrier function for the closed-loop system, concluding the proof.
\end{proof}
Theorem~\ref{thm:unite_implicit} establishes that multiple discretized implicit safe sets, each generated by a distinct backup controller, can be aggregated through a single optimization-based safety filter while preserving continuity and forward invariance. This accomplishes our objective of uniting certified safe operating regions into a larger aggregated safe set.

\section{Applications to Space Systems}
\subsection{Spacecraft attitude}
Consider an underactuated, rotating satellite, modeled by
\begin{align}
    \dot \bR &= \bR \hat \bOmega, \; \bJ\dot \bOmega + \bOmega \times \bJ \bOmega = \bB \bu. \label{eq:satellite-orient}
\end{align}
Here ${\bR \in SO(3)}$ is the orientation of the satellite with respect to a fixed frame, $\bOmega \in \real^3$ its body angular velocity, and $\hat \bOmega \in \real^{3\times 3}$ is the unique skew matrix for which $\hat \bOmega \bv = \bOmega \times \bv$ for all $v \in \R^3$. The satellite is further rotationally symmetric about the $\be_3$-axis, having $\bJ = \diag(\lambda, \lambda, \hat \lambda)$. The actuation matrix for the system is $\bB = [\be_1, \be_2]$, where $\be_i$ is the $i$'th standard basis vector in $\real^3$. Here, $\bu \in \real^2$ must be chosen such that $\lVert \bu \rVert \leq u_{\max}$, for small $u_{\max} > 0$. 

We consider a safety problem analogous to that of \cite{MD-PO-ADA:26-acc}. In order to stay protected from the Sun, the satellite must orient its heat shield, which is normal to the body-fixed $\be_3$-axis, within a safe angle $\theta_{\text{safe}}$ of the spatial $\be_3$-axis. This imposes a state constraint of the form:
\begin{align*}
    \Sc = \setdefb{(\bR, \bOmega)}{\psi(\bR, \bOmega) = \be_3^\top \bR \be_3 - \cos(\theta_{\text{safe}}) \geq 0}.
\end{align*}
Since $\Sc$ depends only on $\bR\be_3$, the problem is simplified by first reducing the satellite to a system on the sphere $\mathbb S^2$, via the projection $\bR \mapsto \bR\be_3$. By standard symmetry reduction techniques, \eqref{eq:satellite-orient} reduces to a fully-actuated system on $\mathbb S^2$. On $\mathbb S^2$, five backup sets $\{\Cc_j\}_{j = 1}^5$ are constructed. Each set is a sublevel set of a Lyapunov function for the reduced system on $\mathbb S^2$, derived from a geometric PD backup controller $\bk_{{\rm b},j}$ stabilizing to one of $e_3$, $R_x(\pm\theta_{\text{safe}}/2)e_3$, or $R_y(\pm \theta_{\text{safe}}/2)e_3$ in $\mathbb S^2$ \cite[p. 533]{bullo2005geometric}. 

\begin{figure}
    \centering
    \includegraphics[width=0.95\linewidth]{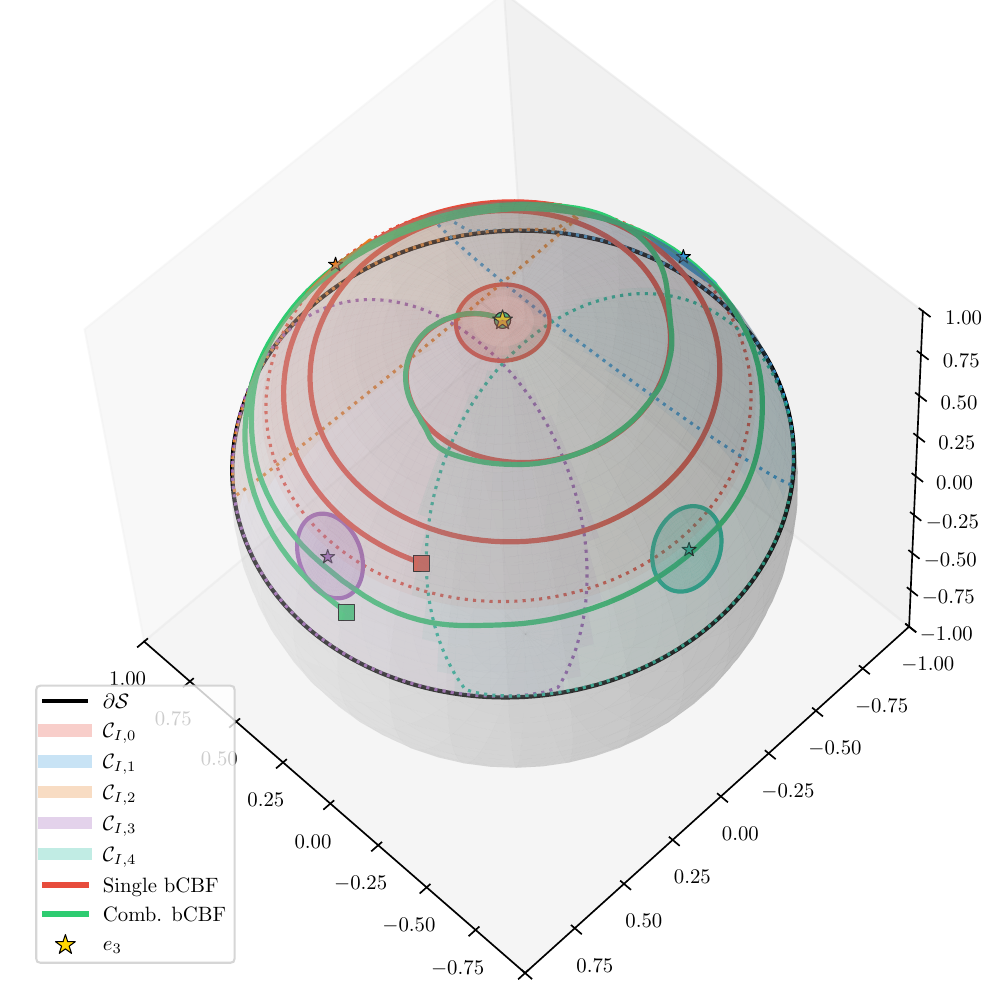}
    \vspace{-.2cm}
    \caption{Simulation results for the satellite are depicted on the sphere under the projection $\bR \mapsto \bR\be_3$. The satellite is commanded to track a trajectory circling around the border of the safe set. Five backup sets are used, with the level set of each $h_j$ plotted as a solid circle and its implicit safe set as a dotted line. Using the proposed combinatorial bCBF, plotted in green, both safety and convergence to the boundary are achieved. As evidenced by the red trajectory, convergence is not possible with a single backup set.}
    \label{fig:backup-satellite}
    \vskip -4mm
\end{figure}

In Figure \ref{fig:backup-satellite}, the CBF-QP of Theorem \ref{thm:unite_implicit} is implemented\footnote{Parameters are $u_{\rm max}\!=\!0.5$, $\lambda\!=\!0.5$, $\hat \lambda\!=\!1$, $\theta_{\rm safe}\!=\!80^\circ$, and $T\! =\! 4$. } with a nominal PD controller designed to track a trajectory oscillating in and out of the safe set. The use of multiple backup sets enables simultaneous tracking and safety, which is not achievable with a single backup set.

\subsection{Safe Spacecraft Station Keeping}

\begin{figure}[t]
    \centering
    \includegraphics[width=.49\linewidth]{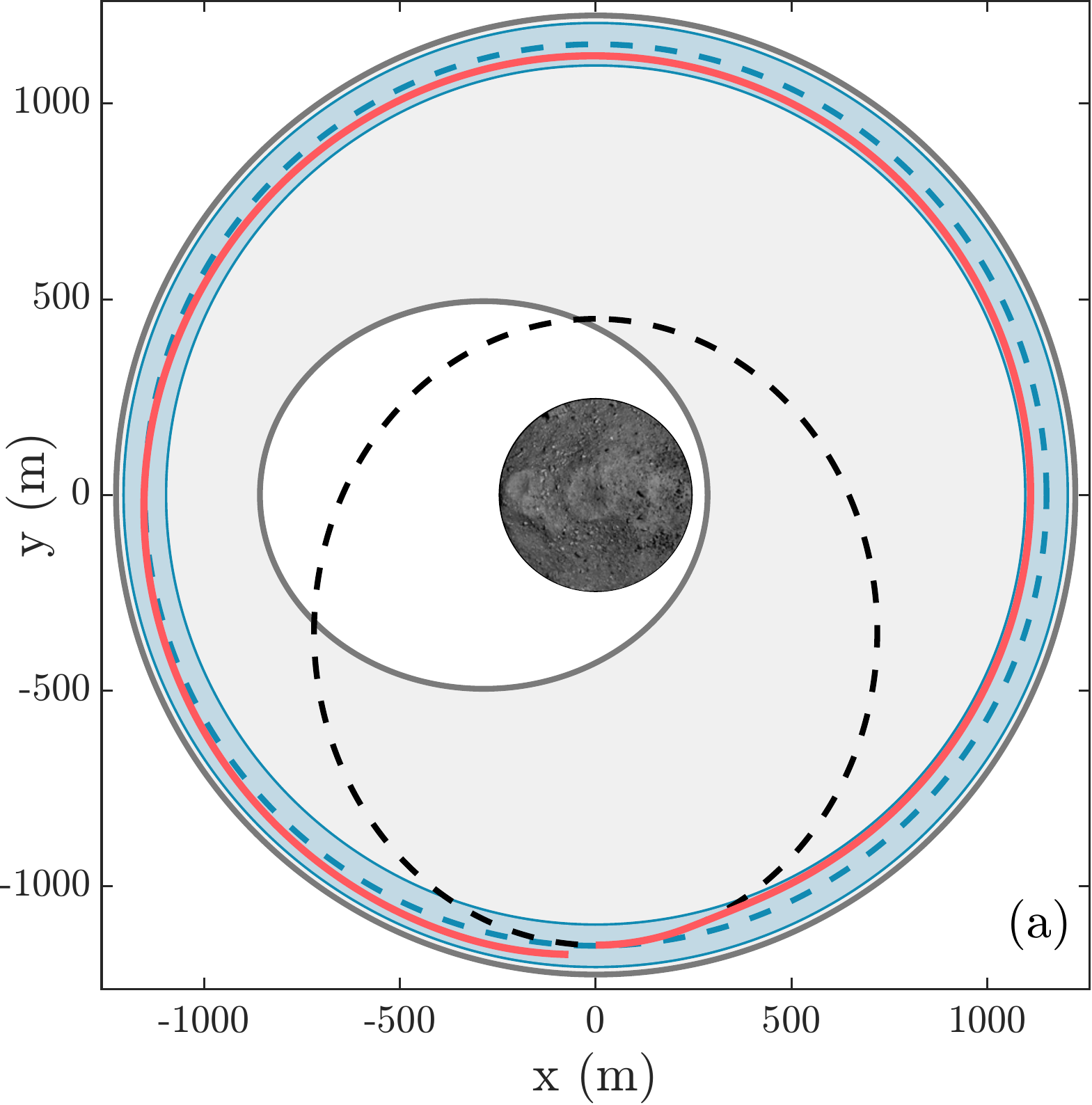}
    \includegraphics[width=.49\linewidth]{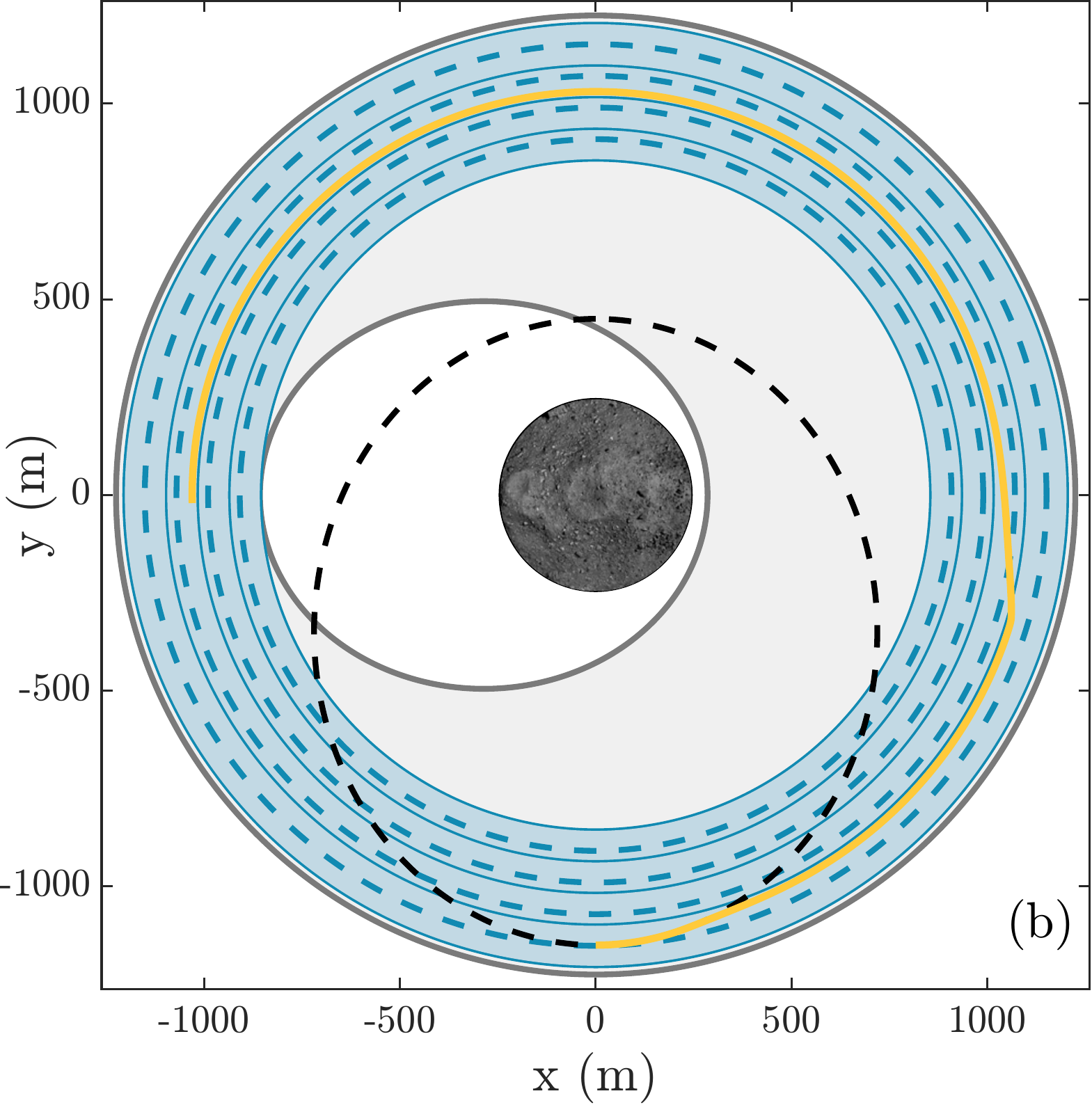}
    \includegraphics[width=.49\linewidth]{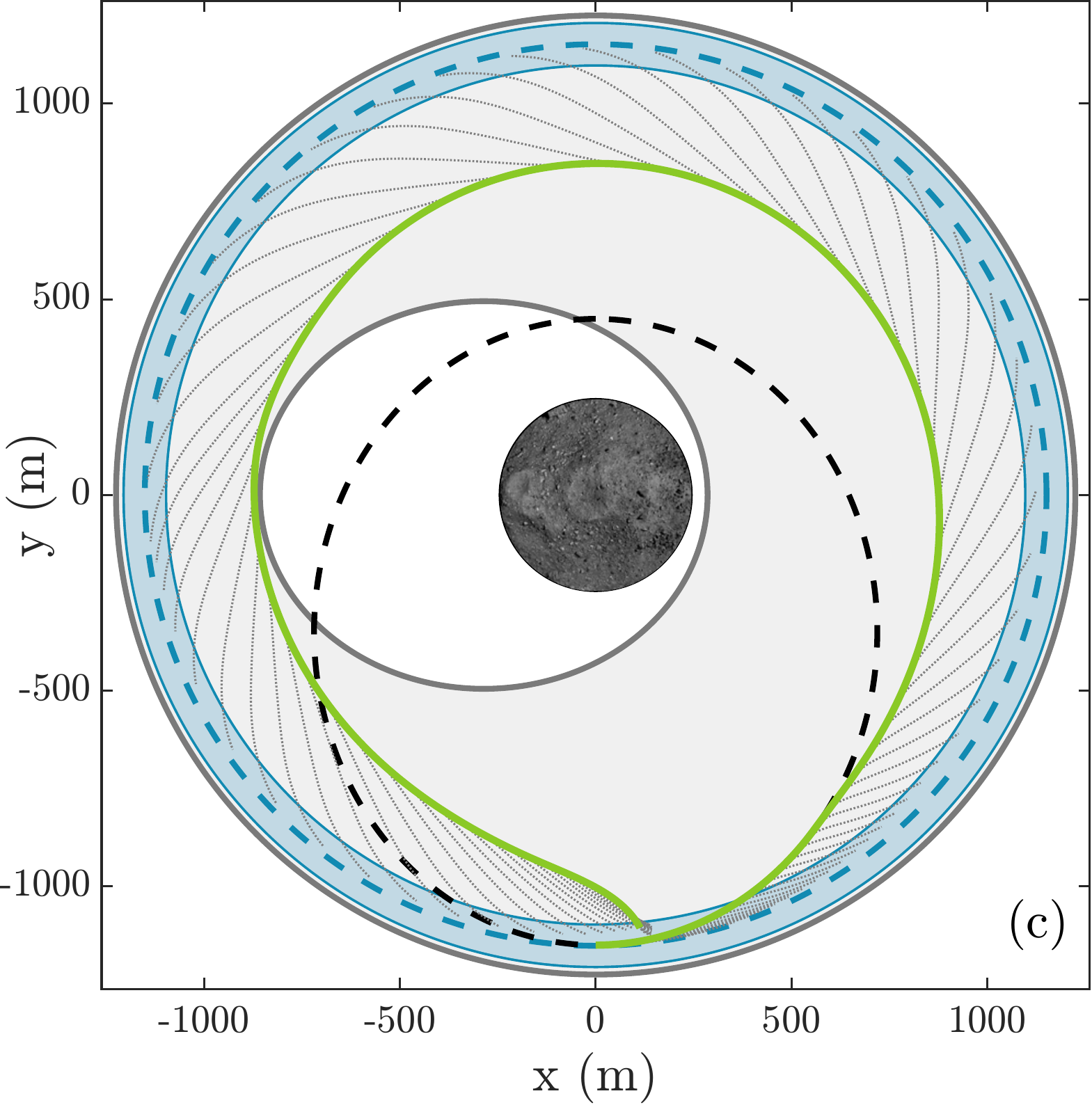}
    \includegraphics[width=.49\linewidth]{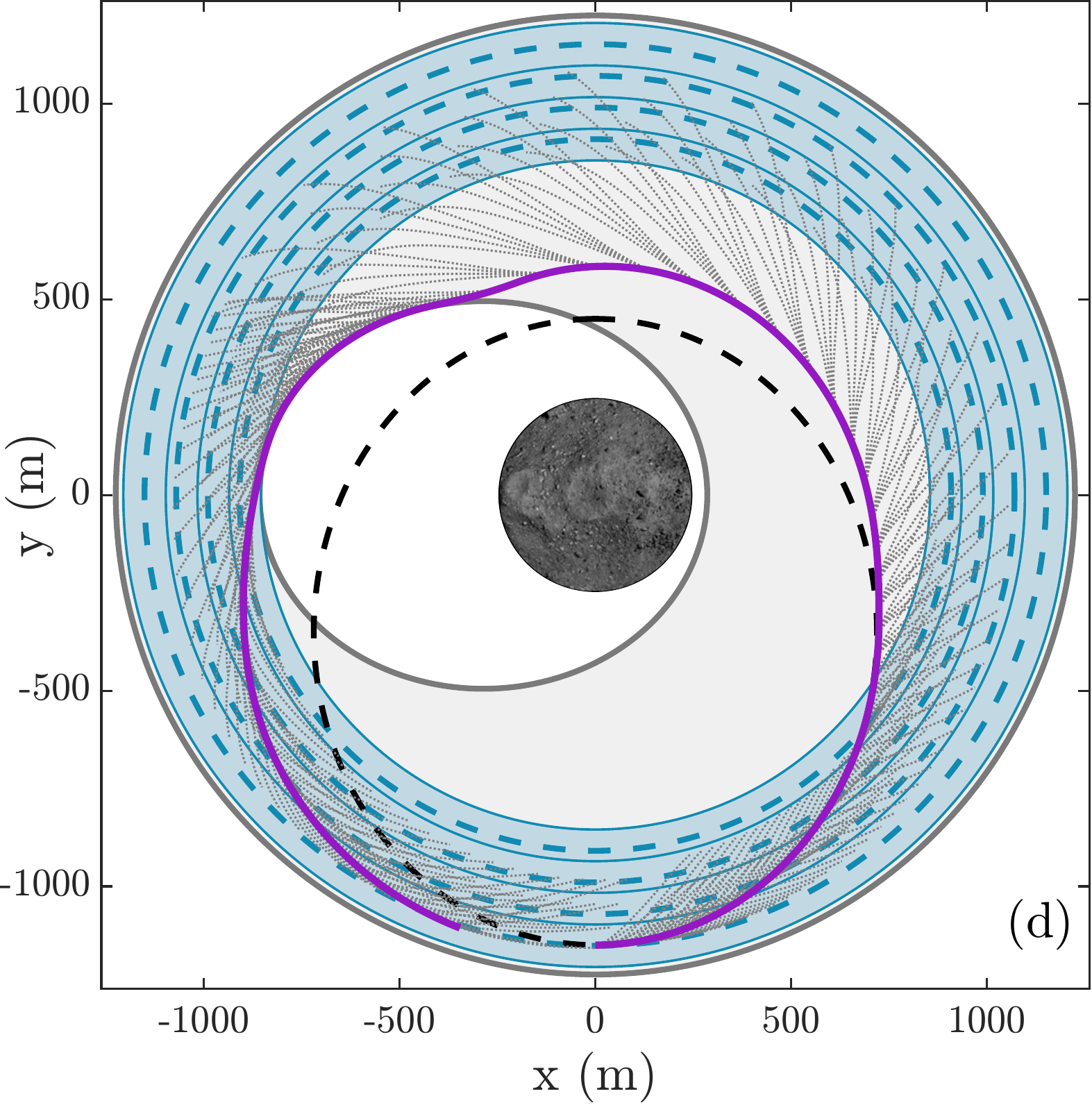}
    \includegraphics[width=.49\linewidth]{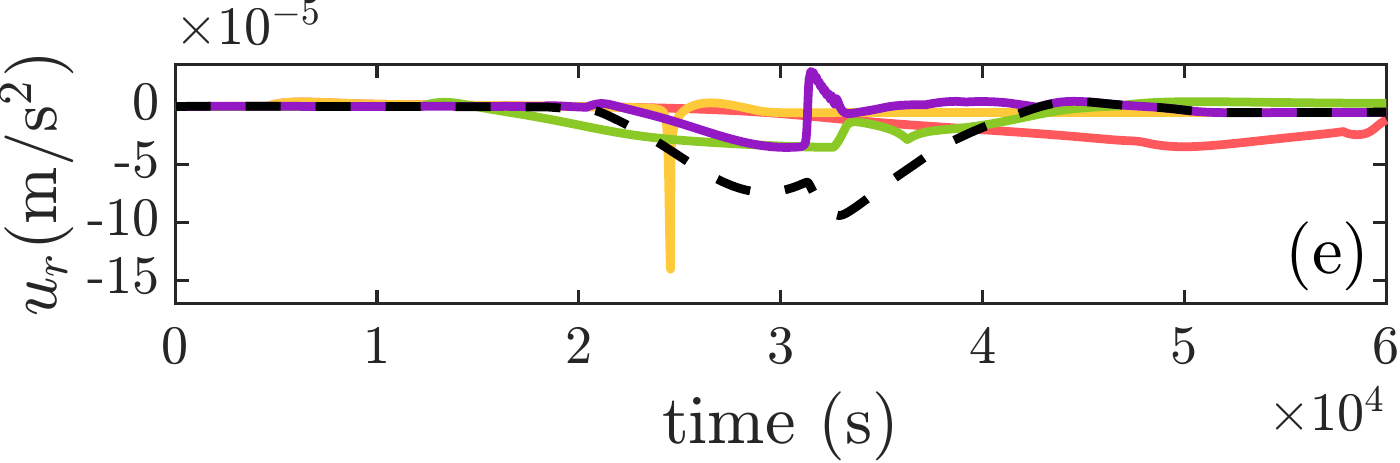}
    \includegraphics[width=.49\linewidth]{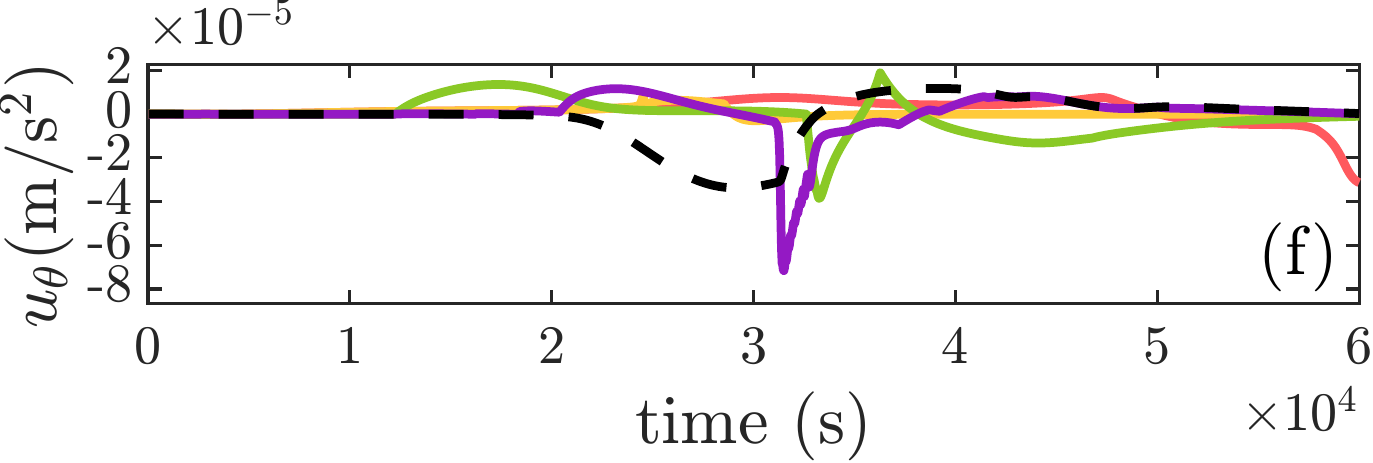}
    \vspace{.4cm}
    \hspace{.3cm}
    \includegraphics[width=.95\linewidth]{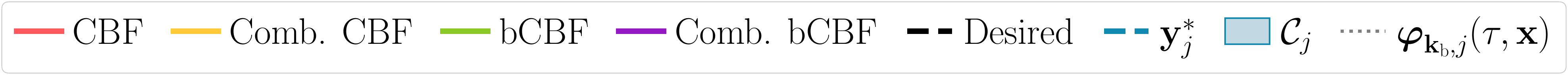}
    \vspace{-.8cm}
    \caption{Simulation results for safe station keeping comparing the standard CBF (\textbf{a}), the standard bCBF (\textbf{b}), the proposed combinatorial CBF (\textbf{c}) and the proposed combinatorial bCBF (\textbf{d}). While all four approaches guarantee the safety of the spacecraft by adhering to the keep-in and keep-out constraints (\textbf{a}-\textbf{d}), and obeying the input bounds (\textbf{e},$\nspace{2}$\textbf{f}) they vary in mission performance. 
    By using a CBF without expansion (\textbf{a},$\nspace{2}$\textbf{b}), the spacecraft cannot approach the desired orbit ({dashed black}), though using multiple CBFs without expansion allows a closer approach (\textbf{b}). Expanding a single CBF with the backup method allows for improved orbit tracking (\textbf{c}), but expanding with multiple backup controllers and CBFs using the proposed approach yields the largest control invariant safe set, and thus achieves superior orbit tracking (\textbf{d}).
    }
    \label{fig:stationkeeping}
    \vskip - 4mm
\end{figure}

Consider the scenario of a satellite orbiting an asteroid in a fixed plane, in order to acquire surface and feature information for a future probe mission. The planar satellite dynamics in polar coordinates can be described by:
\begin{equation} \label{eq:satellite_dyn}
    {\small\begin{bmatrix}
         \dot r \\ \dot \theta \\ \ddot r \\ \ddot \theta
    \end{bmatrix}} = {\small\begin{bmatrix}
        \dot r \\
        \dot \theta \\
        r\dot \theta^2 - \frac{\mu}{r^2}\\
        -\frac{2}{r}\dot r \dot \theta
    \end{bmatrix}}
    +{\small\begin{bmatrix}
        0 & 0 \\
        0 & 0 \\
        1 & 0 \\
        0 & \frac{1}{r}
    \end{bmatrix}}
    \bu,
\end{equation}
with states ${\bx =[
    r \nspace{10} \theta \nspace{10} \dot r \nspace{10} \dot \theta
]^\top}$ denoting the radial position and angle with respect to the asteroid, and their time rates of change, respectively. The satellite is assumed to have a continuous low-thrust electric propulsion system 
where $\boldsymbol{u} \in \mathcal{U} = [-u_{\rm max}, u_{\rm max}]^2$. In \eqref{eq:satellite_dyn}, $\mu = G M$ is the standard gravitational parameter for the asteroid, where $G$ is the gravitational constant and $M$ is the mass of the asteroid\footnote{The simulation uses the parameters for the asteroid 101955 Bennu.}.

In this scenario, upon entering an orbit near the asteroid, the satellite detects a debris field to be avoided that is characterized by an elliptical region encircling the asteroid. To avoid the region, one of the state constraints is: 
\begin{align*}
    \Sc_1 = \setdefB{\bx\in\real^n}{\psi_1(\bx) = r - \frac{p_\obs}{1+e_\obs\cos \theta} \geq 0},
\end{align*}
where ${p_\obs > 0}$ and ${e_\obs \in [0,1)}$ are the semi-latus rectum and the eccentricity of the outer edge of the debris field, respectively. Further, the satellite must remain within a region where high-quality scientific data can be obtained: 
\begin{align*}
        \Sc_2 = \setdefb{\bx\in\real^n}{\psi_2(\bx) = R - r \geq 0},
\end{align*}
where ${R > 0}$ must be chosen such that ${p_\obs \leq R (1-e_{\rm o})}$. The desired controller ${\bk_\des : \R^n \rightarrow \mathcal{U}}$ tracks an elliptical orbit described by ${r = p_{\rm d}/(1 + e_{\rm d}\cos(\theta))}$ for desired semi-latus rectum and eccentricity ${p_{\rm d}>0}$ and ${e_{\rm d}\in[0,1)}$, respectively. 

The backup sets $\{\mathcal{C}_j\}_{j=1}^p$  are defined as sublevel sets of Lyapunov functions centered around circular orbits which are described by ${\by_j^* \triangleq (r^*, \dot r^*, \dot \theta^*) = \left(r^*, 0, \sqrt{\frac{\mu}{{r^*}^3}}\right)}$, such that: 
\begin{align*}
    \mathcal{C}_j \!=\! \setdefb{\bx\in\real^n}{h_j(\bx) = \gamma_j - (\by-\by_j^*)^\top\bP_j(\by-\by_j^*) \geq 0},
\end{align*}
for ${\by \!=\! [r \nspace{10} \dot{r} \nspace{10} \dot{\theta}]^\top}$ and ${\gamma_j \!> \!0}$. Here, ${\bP_j \!=\!\bP^\top_j \!\succ\! 0}$ is obtained by solving the continuous algebraic Ricatti equation\footnote{This is given by ${\bA^\top\bP +  \bP\bA - \bP\bB\bR^{-1}\bB^\top\bP + \bQ = \mathbf{0}}$.} with: 
\begin{align*}
    \bA \!=\! {\footnotesize \begin{bmatrix}
        0 & 1 & 0\\
        0 & 0 & 0\\
        0 & 0 & 0\\
    \end{bmatrix}}\!, \nspace{5}
    \bB \!=\! {\footnotesize \begin{bmatrix}
        0 & 0 \\
        1 & 0 \\
        0 & 1
    \end{bmatrix}}\!, \nspace{5}
    \bQ\!=\! \bQ^\top \!\succ\! 0,
    \nspace{5}
    \bR\! =\! \bR^\top \!\succ\! 0,
\end{align*}
after \eqref{eq:satellite_dyn} is transformed via feedback linearization \cite{LG-AKK-TGM:25}.

The backup controllers $\{\bk_{{\rm b},j}\}_{j=1}^p$ stabilize \eqref{eq:satellite_dyn} to their respective orbits $\by^*_j$ via Sontag's Universal Formula for stabilization \cite{EDS:89a} which is saturated to obey the input bounds. 
Thus for all ${j \!\in  \!\until p}$, $\gamma_j$ must be selected such that the backup controller $\bk_{{\rm b},j}$ does not saturate within $\mathcal{C}_j$ and ${\mathcal{C}_j \!\subset \!\Sc_1 \!\cap\! \Sc_2}$.

\Cref{fig:stationkeeping} plots the simulation results\footnote{The simulation uses the constants ${p = 4}$, ${u_{\rm max} = 2.5\!\times\!10^{-4}\nspace{3}{\rm m/s^2}}$, 
${M = 7.329\!\times\!10^{10} \nspace{3}{\rm kg}}$, ${G \!=\! 6.674\!\times\!10^{-11} \nspace{3}{\rm m^3 kg^{-1} s^{-2} }}$, ${T \!=\! 5\!\times\!10^3 \nspace{3}{\rm s}}$,
${R \!= \!1.225\!\times\!10^3\nspace{3}{\rm m}}$,
${p_{\rm o} \!=\! 428.8 \nspace{3}{\rm m}}$, ${e_{\rm o} \!=\! 0.5}$, ${p_{\rm d}\! =\! 646.9 \nspace{3}{\rm m}}$, ${e_{\rm d} \!= \!0.4375}$. In nondimensional units: ${\gamma_j = 0.05 \nspace{5} \forall j \in \until{p}}$, ${\bQ = \bI_3}$, ${\bR = \bI_2}$.}$^{\nspace{-2},}$\footnote{For numerical stability, the states and control signals are nondimensionalized during computation with characteristic length given by the mean radius of the asteroid ($245.03 \nspace{3}{\rm m}$) and characteristic time selected such that ${\mu = 1}$ in dimensionless units.} for \eqref{eq:satellite_dyn} comparing the standard CBF approach, the standard backup CBF approach, the combinatorial CBF approach, and the combinatorial backup CBF approach. Due to the tight input constraints, the safe sets $\mathcal{C}_j$ are small compared to the constraint set defined by ${\Sc_1 \cap \Sc_2}$. Therefore, when using a single CBF without expansion in \eqref{eq:CBF-QP}, the motion of the spacecraft is restricted by the safety condition. By combining multiple CBFs using \eqref{eq:gen-comb-CBF-QP}, the conservatism is reduced by traveling from the outer-most safe set to inner safe sets that are closer to the desired orbit. The conservatism is further reduced via expansion of a single safe set, by enforcing the reachability of $\mathcal{C}$ under the controller $\bk_{\rm b}$ (e.g., the trajectories in dashed gray must reach $\mathcal{C}$ in a finite time). Finally, by expanding \textit{multiple} safe sets using \eqref{eq:gen-combo-implicit-CBF-QP-finite} the spacecraft can track the desired elliptical orbit for a significant portion of the mission, adheres to the constraints defined by $\psi_1$ and $\psi_2$, and obeys the input constraints imposed by $\mathcal{U}$.

\section{Conclusion}

This paper developed a generalized combinatorial control barrier function framework for uniting multiple certified safe sets under input constraints. By leveraging the combinatorial CBF framework to combine backup CBF constructions, we showed that implicit safe sets generated by distinct backup controllers can be aggregated through a single optimization-based safety filter while preserving continuity and forward invariance. This is enabled by an auxiliary-variable-based relaxation that restores feasibility of the resulting constraints, providing a principled mechanism for expanding certified safe operating regions from locally constructed safe sets. Future work will investigate trajectory generation methods from orbital mechanics, such as Lambert's problem, to efficiently construct backup trajectories and further enhance the applicability of the framework in space systems.

\end{spacing}

\vskip - 4mm
\begin{spacing}{0.94}
\bibliographystyle{ieeetr}
\bibliography{
    bib/alias,
    bib/PO,
    bib/main-Pio,
    bib/backup
}
\end{spacing}
\end{document}